\renewcommand\footnotetextcopyrightpermission[1]{}
\algnewcommand\Not{\textbf{ not }}
\algnewcommand\AndT{\textbf{ and }}
\algnewcommand\OrT{\textbf{ or }}
\algnewcommand\In{\textbf{ in }}
\newcommand{\sysname}{ISS\xspace}
\newcommand{\sysnameFull}{Insanely Scalable SMR\xspace}
\newcommand{\modulename}{Sequenced Broadcast\xspace}
\newcommand{\moduleAbbr}{SB\xspace}
\newcommand{\moduleAbbrSmall}{sb\xspace}
\newcommand{\moduleProperty}{SB}
\newcommand{\smr}{SMR\xspace}
\newcommand{\smrProperty}{SMR}
\newcommand{\tob}{TOB\xspace}
\newcommand{\maxSnE}[1]{max(Sn(#1))}
\newcommand{\segment}[2]{Seg(#1,#2)}
\newcommand{\maxSnS}[2]{max(\segment{#1}{#2})}
\newcommand{\epLeaders}[1]{Leaders(#1)}
\newcommand{\numLeaders}[1]{|\epLeaders{#1}|}
\newcommand{\initBuckets}[2]{initBuckets(#1, #2)}
\newcommand{\extraBuckets}[1]{extraBuckets(#1)}
\newcommand{\activeBuckets}[2]{Buckets(#1, #2)}
\newcommand{\bcast}{\textsc{SMR-CAST}}
\newcommand{\deliver}{\textsc{SMR-DELIVER}}
\newcommand{\suspect}{\textsc{SUSPECT}}
\newcommand{\restore}{\textsc{RESTORE}}
\newcommand{\rbcast}{\textsc{BRB-CAST}}
\newcommand{\rbdeliver}{\textsc{BRB-DELIVER}}
\newcommand{\rbProperty}{BRB}
\newcommand{\bcpropose}{\textsc{BC-PROPOSE}}
\newcommand{\bcdecide}{\textsc{BC-DECIDE}}
\newcommand{\bcProperty}{BC}
\newcommand{\sbcast}{\textsc{SB-CAST}}
\newcommand{\sbinit}{\textsc{SB-INIT}}
\newcommand{\sbdeliver}{\textsc{SB-DELIVER}}
\newcommand{\checkpoint}{\textsc{CHECKPOINT}\xspace}
\newcommand{\stableCheckpoint}{\textsc{STABLE-CHECKPOINT}\xspace}
\newcommand{\request}{\textsc{REQUEST}\xspace}
\begin{document}
\fancyhf{} 
\fancyhead{}
\fancyfoot[C]{\thepage}

\date{}

\title{State-Machine Replication Scalability Made Simple\\ (Extended Version)}

\author{Chrysoula Stathakopoulou}
\affiliation{%
  \institution{IBM Research Europe - Zurich}
  \country{}
}
\author{Matej Pavlovic}
\affiliation{%
  \institution{Protocol Labs}
  \country{}
}
\authornote{Work done while at IBM Research - Zurich}

\author{Marko Vukoli\'c}
\affiliation{%
  \institution{Protocol Labs}
  \country{}
}
\authornote{Work done while at IBM Research - Zurich}

\begin{abstract}
Consensus, state-machine replication (SMR) and total order broadcast (TOB) protocols
are notorious for being poorly scalable with the number of participating nodes.
Despite the recent race to reduce overall message complexity of leader-driven SMR/TOB protocols,
scalability remains poor and the throughput is typically inversely proportional to the number of nodes.
We present Insanely Scalable State-Machine Replication, a generic construction to turn leader-driven protocols into scalable multi-leader ones.
For our scalable SMR construction we use a novel primitive
called Sequenced (Total Order) Broadcast (SB) which we wrap around PBFT, HotStuff and Raft leader-driven protocols to make them scale.
Our construction is general enough to accommodate most leader-driven ordering protocols (BFT or CFT) and make them scale.
Our implementation
improves the peak throughput of
PBFT, HotStuff, and Raft by 37x, 56x, and 55x, respectively, at a scale of 128 nodes.
\end{abstract}

\maketitle

\section{Introduction}

Considerable research effort has recently been dedicated to scaling state-machine replication (SMR) and total-order broadcast (TOB) or consensus protocols,
fundamental primitives in distributed computing.
By scaling, we mean maintaining high throughput and low latency despite a growing number of nodes (replicas) $n$.
Driven by the needs of blockchain systems, particular focus lies on Byzantine fault-tolerant (BFT) protocols
in the eventually synchronous (deterministic protocols) or fully asynchronous (randomized protocols) model.

In this model, the classical Dolev/Reischuk (DR) \cite{DR-85} lower bound requires $\Omega(n^2)$ worst case message complexity,
which was a focal complexity metric of many subsequent protocols including recent HotStuff~\cite{hotstuff}.
However, we claim message complexity to be a rather poor scalability metric,
demonstrated by the fact that HotStuff and other leader-driven protocols scale
inversely proportionally to the number of nodes, despite some of them matching the DR lower bound.
This is because in leader-driven protocols, the leader has at least $O(n)$ bits to send, even in the common case,
yielding $n^{-1}$ throughput scalability.

A recent effort to overcome the single leader bottleneck
by allowing multiple parallel leaders (Mir-BFT~\cite{stathakopoulou2019mir}) in the classical PBFT protocol \cite{Castro:2002:PBF}
demonstrates high scalability in practice.
Despite  certain advantages of PBFT,
e.g., being highly parallelizable and designed not to require signatures on protocol messages,
among the many (existing and future) TOB solutions, there are none that fit all use cases.
HotStuff~\cite{hotstuff} is the first protocol with linear message complexity both in common case and in leader replacement, making it suitable for highly asynchronous or faulty networks.
On the other hand, other protocols, e.g., Aliph/Chain~\cite{700paper}, have optimal throughput when failures are not expected to occur often.
Finally, crash fault-tolerant (CFT) protocols such as Raft~\cite{RAFT} and Paxos~\cite{lamport2001paxos} tolerate a larger number of (benign) failures than BFT protocols for the same number of nodes.

Our work takes the Mir-BFT effort one step further,
introducing by introducing \sysnameFull, hereinafter referred to as \sysname, the first modular framework to make leader-driven \tob protocols scale.
\sysname is a classic SMR system that establishes a total order of client requests with typical liveness and safety properties,
applicable to any replicated service, such as resilient databases or a blockchain ordering layer (e.g., as in Hyperledger Fabric~\cite{AndroulakiBBCCC18}).

Notably, and unlike previous efforts~\cite{stathakopoulou2019mir}\cite{avarikioti2020fnf}, \sysname achieves scalability without requiring a primary node to periodically decide on the protocol configuration.
\sysname achieves this by introducing a novel abstraction, \modulename (\moduleAbbr), which requires each instance of an ordering protocol to \emph{terminate} after delivering a finite number of messages.
This allows nodes in \sysname to decide on the configuration independently and deterministically, without requiring additional communication and without relying on a single primary node.

This in turn allows for more flexible and fair leader selection policies.
Moreover, it guarantees better resilience against an adaptive adversary that can corrupt the primary node, which changes slowly and can be known in advance with a deterministic round robin rotation~\cite{stathakopoulou2019mir}\cite{avarikioti2020fnf}.

\sysname implements SMR by multiplexing multiple instances of \moduleAbbr which
operate concurrently on a \emph{partition} of the domain of client requests.
We carefully select the partition to maintain safety and liveness, as well as to prevent redundant data duplication, which has been shown to be detrimental to performance~\cite{stathakopoulou2019mir}.
This is qualitatively better than related modular efforts~\cite{gupta2019scaling,BFT-Mencius}
which do not provide careful partitioning and load balancing and hence cannot achieve the same scalability and robustness at the same time.

\sysname maintains a \emph{contiguous} log of (batches of) client requests at each node.
Each position in the log corresponds to a unique sequence number
and \sysname agrees on the assignment of a unique request batch to each sequence number.
Our goal is to introduce as much parallelism as possible in assigning batches to sequence numbers
while avoiding \emph{request duplication}, i.e., assigning the same request to more than one sequence number.
To this end, \sysname subdivides the log into non-overlapping \emph{segments}.
Each segment, representing a subset of the log's sequence numbers, corresponds to an independent instance of \moduleAbbr
that has its own leader and executes concurrently with other \moduleAbbr instances.
\moduleAbbr abstraction, moreover, facilitates the reasoning about multiplexing the outputs of multiple instances into a single log,
while staying very close to classic definitions of broadcast and thus being easily implementable by existing algorithms.

To prevent the leaders of two different segments from concurrently proposing the same request,
and thus wasting resources, while also preventing malicious leaders from censoring
(i.e., not proposing) certain requests,
we adopt and generalize the rotating bucketization of the request space introduced by MirBFT~\cite{stathakopoulou2019mir}.
\sysname assigns a different \emph{bucket}---subset of client requests---to each segment.
No bucket is assigned to more than one segment at a time and each request maps (through a hash function) to exactly one bucket.
\sysname periodically changes the bucket assignment, such that each bucket is guaranteed to eventually be assigned to a segment with a correct leader.

To maintain the invariant of one bucket being assigned to one segment, all buckets need to be re-assigned at the same time.
\sysname therefore uses finite segments that it groups into \emph{epochs}.
An epoch is a union of multiple segments that forms a contiguous sub-sequence of the log.
After all log positions within an epoch have been assigned request batches,
and thus no requests are ``in-flight'',
\sysname advances to the next epoch, meaning that it starts processing a new set of segments
forming the next portion of the log.

We implement and deploy \sysname on a wide area network (WAN) spanning 16 different locations spread around the world,
demonstrating \sysname{}s performance using two different BFT protocols (PBFT\cite{Castro:2002:PBF} and Hotstuff\cite{hotstuff})
and one CFT protocol (Raft\cite{RAFT}).
On 128 nodes \sysname improves the performance of the single leader counterpart protocols, PBFT, HotStuff, and Raft, by 37x, 56x, and 55x, respectively.

The rest of this paper is organized as follows.
\Cref{sec:foundation} presents the theoretical foundation of our work.
It models the systems we study (\Cref{sec:model}),
introduces the \moduleAbbr abstraction (\Cref{sec:sb}) and
describes how we multiplex \moduleAbbr instances with \sysname (\Cref{sec:multiplexing,sec:req-dupl}).
\Cref{sec:details,sec:implementation} respectively describe the details of \sysname and its implementation,
including three different leader-driven protocols.
In \Cref{sec:correctness} we prove that \moduleAbbr can be implemented with consensus and that multiplexing \moduleAbbr instances with
\sysname implements \smrProperty.
In \Cref{sec:evaluation} we evaluate the performance of \sysname.
In \Cref{sec:related} we discuss related work.
We conclude in \Cref{sec:conclusion}.

\section{Theoretical Foundations}
\label{sec:foundation}

 This section describes in detail the principles and abstractions underlying \sysname.
We first define the system model, and present our core \modulename (\moduleAbbr) abstraction.
We then introduce \sysname which uses \moduleAbbr instances to implement state machine replication.

\subsection{System Model}
\label{sec:model}

We assume a set $\mathcal{N}$ of \emph{node} processes with $|\mathcal{N}|=n$.
At most, $f$ of the nodes in $\mathcal{N}$ can fail.
We further assume a set $\mathcal{C}$ of \emph{client} processes of arbitrary size, any of which can be faulty.
Each process is identified by its public key, provided by a public key infrastructure.
Unless mentioned otherwise, we assume Byzantine, i.e., arbitrary, faults.
Therefore, we require $n\ge 3f+1$.
We further assume that nodes in $\mathcal{N}$ are computationally bounded and cannot subvert standard cryptographic primitives.

Processes communicate through authenticated point-to-point channels.
We assume a partially synchronous network~\cite{Dwork:1988:CPP:42282.42283}
such that the communication between any pair of correct processes is asynchronous before an unknown time $GST$,
when the communication becomes synchronous.

Nodes in $\mathcal{N}$ implement a state machine replication (\smr) service to clients in $\mathcal{C}$.
To broadcast request $r$, a client $c$ triggers an $\langle \bcast | r\rangle$ event.
A client request is a tuple $r=(o, id)$, where $o$ is the request payload, e.g., some operation to be executed by some application, and $id$ a unique request identifier.
The request identifier is a tuple $id=(t,c)$ where $t$ is a logical timestamp and $c$ a client identifier, e.g., a client public key.
Two client requests $r=(o,id),r'=(o',id')$ are considered equal,
we write $r=r'$ and we refer to them as duplicates, if and only if
$o=o'\land id=id'$.

Nodes assign a unique sequence number $sn$ to $r$ and eventually output an $\langle\deliver | sn, r\rangle$ event such that the following properties hold:

\noindent \textbf{\smrProperty1 {Integrity}: }
If a correct node delivers $(sn, r)$, where $r.id.c$ is a correct client's identidy,
then client $c$ broadcast $r$.\\
\noindent \textbf{\smrProperty2 Agreement:}
If two correct nodes deliver, respectively, $(sn,r)$ and $(sn,r')$, then $r=r'$.\\
\noindent \textbf{\smrProperty3 Totality:}
If a correct node delivers $(sn,r)$, then every correct node eventually delivers $(sn,r)$.\\
\noindent \textbf{\smrProperty4 Liveness:}
If a correct client broadcasts request $r$, then some correct node eventually delivers $(sn,r)$.

 \paragraph{No-duplication.}
 Our \smr implementation with \sysname further guarantees that if a correct node delivers request $(r,sn)$ and  $(r,sn')$, then  $sn = sn'$.
 Notice that invoking of \smr with distinct sequence numbers allows to prevent the execution of duplicate requests: an application can trivially filter out $r,sn'$ when $(r,sn)$ with $sn'>sn$ is already executed.
 However, enforcing  that no correct node invokes \smr with a duplicate request is critical for performance.

\subsection{\modulename (\moduleAbbr)}
\label{sec:sb}

\modulename (\moduleAbbr) is a variant of Byzantine total order broadcast~\cite{cachin2010introduction}
with explicit sequence numbers and an explicit set of allowed messages.

\moduleAbbr is instantiated with a failure detector instance as a parameter.
We assume an \emph{eventually strong failure detector} in an environment with Byzantine faults denoted as  $\Diamond S(bz)$, as defined by Malkhi and Reiter~\cite{malkhi1997unreliable}.

In particular, we assume that each node has access to a local failure detector module $D$ which provides a list of suspected nodes $D.suspected$.
When a node $p$ is present in the list of suspects of a node $q$ we say that $q$ \emph{suspects} $p$.
Each node's list of suspects may change over time and may differ from the lists of other nodes.
We denote with $\langle\suspect | p \rangle$  the event of adding a node $p$ to the list suspects.
We denote with $\langle\restore | q \rangle$  the event of removing a node $q$ from the list of suspects.

The failure detector D of the class $\Diamond S(bz)$ detects \emph{quiet} nodes.
Intuitively, a quiet node is the equivalent to a crashed node in the BFT model,
accounting for non-crash faults that are indistinguishable from crashes.
For the exact definition we refer the reader to Malkhi and Reiter's work~\cite{malkhi1997unreliable}.

A failure detector of the $\Diamond S(bz)$ class guarantees the following two properties:

\noindent\textbf{Strong Completeness:} There is a time after which every quiet node is permanently suspected by every correct node.\\
\noindent\textbf{Eventual Weak Accuracy:} There is a time after which some correct node is never suspected by any correct node.

Notice that a faulty node is not suspected unless quiet, even in if it sends malformed messages.
In other words, the failure detector is only concerned with the lack of messages, not with their content.

We can now define \modulename as follows.
Let $M$ be a set of messages and $S \subseteq \mathbb{N}$ a set of sequence numbers.
Only one designated sender node $\sigma\in \mathcal{N}$ can broadcast messages (we hereon write sb-cast to distinguish from other primitives) by invoking $\langle \sbcast | sn, m \rangle$ with $(sn, m) \in S \times M$.\\
$\langle \sbdeliver | sn, m \rangle$ is triggered at a correct node $p$ when $p$ delivers (we hereon write sb-delivers) message $m$ with sequence number $sn$.

If a correct node suspects that $\sigma$ is quiet, all correct nodes are allowed to sb-deliver a special \emph{nil} value $m = \bot \notin M$.
If, however, $\sigma$ is trusted by all correct nodes,
all correct nodes are guaranteed to sb-deliver non-nil messages $m \neq \bot$.

\moduleAbbr is explicitly initialized with an $\langle \sbinit \rangle$ event.
We assume a failure detector list at each correct node which is initially empty.
It is only after the invocation of $\langle \sbinit \rangle$ that suspecting $\sigma$ can lead to the $\bot$ value being delivered.

\noindent An instance of $\moduleAbbr(\sigma, S, M, D)$ has the following properties:

\noindent\textbf{\moduleProperty1 Integrity:}
If a correct node sb-delivers $(sn, m)$
with $m\neq\bot$ and $\sigma$ is correct then $\sigma$ sb-cast $(sn, m)$.\\
\noindent \textbf{\moduleProperty2 Agreement:}
If two correct nodes sb-deliver, respectively, $(sn, m)$ and $(sn, m')$, then $m = m'$.\\
\noindent\textbf{\moduleProperty3 Termination:}
If $p$ is correct, then $p$ eventually sb-delivers a message for every sequence number in $S$, i.e., $\forall sn \in S : \exists m \in M \cup \{\bot\}$
such that $p$ sb-delivers $(sn, m)$.\\
\noindent\textbf{\moduleProperty4 Eventual Progress:}
If some correct node sb-delivers $(sn,\bot)$ for some $sn\in S$, then some correct node $p$ suspected $\sigma$ after $sb$ is initialized at $p$.

The key differences in comparison to \tob are that:
\begin{itemize}
  \item \moduleAbbr is invoked for an explicit set of sequence numbers $S$ and messages $M$.
  \item \moduleAbbr is invoked with a $\Diamond S(bz)$ failure detector.
  \item correct nodes deliver messages from set $M$ \emph{and} the special $\bot$ value.
  \item \moduleAbbr terminates for all seqence numbers.
\end{itemize}
The latter is guaranteed by $\bot$ value and $\Diamond S(bz)$ completeness;
if $\sigma$ is quiet it will eventually be suspected by all correct nodes.

\subsection{Multiplexing Instances of \moduleAbbr with \sysname}
\label{sec:multiplexing}
\sysname multiplexes instances of \moduleAbbr to implement \smr.
Each node maintains a log of ordered messages which correspond to batches of client requests.
Each position in the log corresponds to a \emph{sequence number} signifying the offset from the start of the log.
The log is partitioned in subsets of sequence numbers called \emph{segments}. Each segment corresponds to one instance of \moduleAbbr.
Nodes obtain requests from clients and, after mapping them to a log position using an instance of \moduleAbbr,
deliver them together with the assigned sequence number.

\sysname proceeds in \emph{epochs} identified by monotonically increasing integer \emph{epoch numbers}.
Each epoch $e$ is associated with a set of segments. The union of those segments forms a set $Sn(e)$ of consecutive sequence numbers.
Epoch $0$ (the first epoch) starts with sequence number 0.
The mapping of sequence numbers to epochs is a function known to all nodes with the only requirements being that it is monotonically increasing and that there are no gaps between epochs.
More formally, $max(Sn(e)) + 1 = min(Sn(e + 1))$.
Epoch length can be arbitrary, as long as it is finite.
For simplicity, we use a fixed, constant epoch length.

\begin{figure}[h]
  \centering
  \includegraphics[width=\columnwidth]{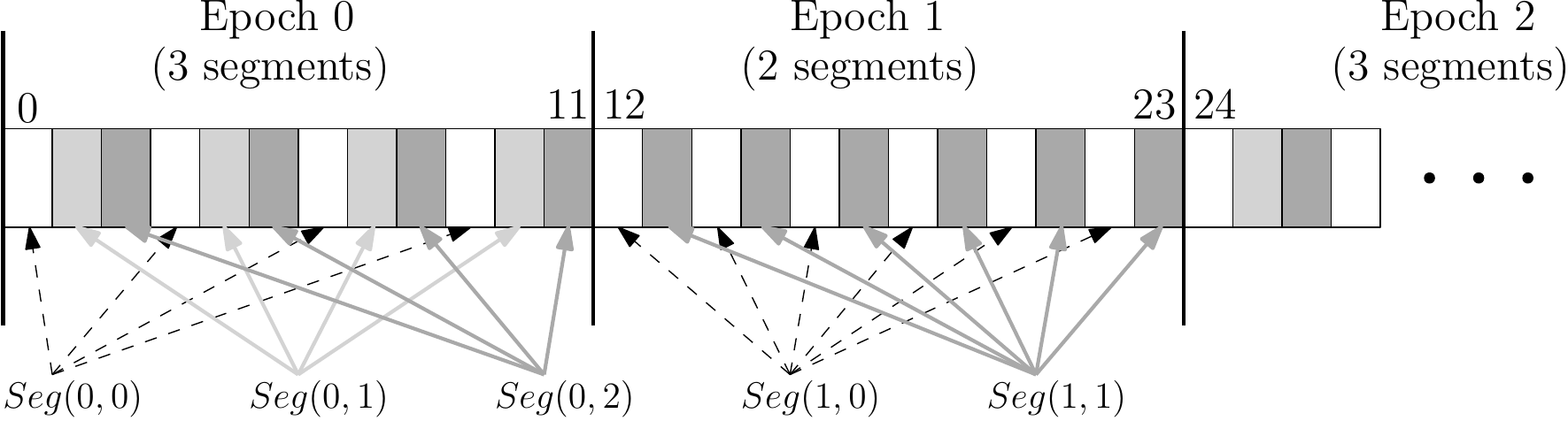}
  \caption{\label{fig:epochs-segments} Log partitioned in epochs and segments.
  In this particular example, each epoch is 12 sequence numbers long.
  The first epoch has 3 segments while the second epoch only has 2,
  i.e.,  $\numLeaders{0} = 3$, $\numLeaders{1} = 2$, $\maxSnE{1} = 23$ and $\maxSnS{0}{1} = 10$.}
\end{figure}

Epochs are processed sequentially, i.e., \sysname first agrees on the assignment of request batches to all sequence numbers in $Sn(e)$
before starting to agree on the assignment of request batches to sequence numbers in $Sn(e+1)$.

Within an epoch, however, \sysname processes segments in parallel.
Multiple leaders, selected according to a \emph{leader selection policy},
concurrently propose batches of requests for different sequence numbers in $Sn(e)$.
To this end, \sysname assigns a different \emph{leader} node to each segment in epoch $e$.
We refer to the set of all nodes acting as leaders in an epoch as the leaderset of the epoch.
The numbers of leaders and segments in each epoch always match.
We denote by $\segment{e}{i}$ the subset of $Sn(e)$ for which node $i$ is the leader.
This means that node $i$ is responsible for proposing request batches to sequence numbers in $\segment{e}{i}$.
No node other than $i$ can propose batches for sequence numbers in $\segment{e}{i}$.
Let $\epLeaders{e}$ be nodes that are leaders in epoch $e$.
We associate sequence numbers with segments in a round-robin way, namely, for $0 \leq i < \numLeaders{e}$,
\begin{equation*}
\segment{e}{i} \subseteq Sn(e) = \{sn \in Sn(e) | i \equiv sn \textrm{ mod } \numLeaders{e} \}
\end{equation*}
An example with $\numLeaders{0} = 3$ and $\numLeaders{1} = 2$ is illustrated in \Cref{fig:epochs-segments}.

In principal, any assignment of sequence numbers to segments is possible and leads to a correct algorithm.
We choose the round-robin assignment because it uniformly distributes sequence numbers among instances.
Therefore, in a fault-free execution it is the least likely to create ``gaps'' in the log, which minimizes the end-to-end request latency.

In order not to waste resources on duplicate requests,
we require that a request cannot be part of two batches assigned to two different sequence numbers.
We enforce this at three levels: (1) within a segment, (2) across segments in the same epoch, and (3) across epochs.

Within a segment, we rely on the fact that
the correct leader will propose (and a correct node, as follower, will accept)
only batches with disjoint sets of requests for each sequence number within a segment.
Across segments,
we partition the set of all possible requests into \emph{buckets} using a hash function
and enforce that only requests from different buckets
can be used for different segments within an epoch.
We denote by $\mathcal{B}$ the set of all possible buckets.
We assign a subset of $\mathcal{B}$ to each segment, such that each bucket is assigned to exactly one segment in each epoch.
We denote by $\activeBuckets{e}{i} \subseteq \mathcal{B}$ the set of buckets assigned to leader $i$ in epoch $e$.%
\footnote{We sloppily say that we assign a bucket to a leader $i$
when assigning a bucket to a segment for which $i$ is the leader.}
Across epochs, \sysname  prevents duplication by only allowing a node to propose a request batch in a new epoch once it has added all batches from the previous epoch to  the log.
If a request has been delivered in a batch in the previous epoch, a correct leader will not propose it again (see also \Cref{sec:epochs}).
Also, a correct node, as follower, will not accept a proposal which includes a previously delivered request.

In summary, a segment of epoch $e$ with leader $i$ is defined by the tuple $(e, i, \segment{e}{i}, \activeBuckets{e}{i})$.

For a set of buckets $B\subseteq\mathcal{B}$ , we denote with $batches(B)$ the set of all possible batches consisting of valid (we define request validity precisely later in \Cref{sec:requests}) requests that map to some bucket in $B$.
For each segment $(e,\allowbreak i,\allowbreak \segment{e}{i},\allowbreak \activeBuckets{e}{i})$,
we use an instance $\moduleAbbr(i,\allowbreak batches(\activeBuckets{e}{i}),\allowbreak \segment{e}{i}, \allowbreak D)$ of \modulename, where $D$ is the above-mentioned failure detector.
We say that leader $i$ \emph{proposes} a batch $b\allowbreak \in \allowbreak batches(\activeBuckets{e}{i})$ \allowbreak for sequence  number $sn \allowbreak\in \allowbreak \segment{e}{i}$
if $i$ executes $\sbcast(sn, b)$ at the corresponding instance of \moduleAbbr.
A batch $b$ \emph{commits} with sequence number $sn$ (and is added to the log at the corresponding position) at node $n$ when
the corresponding instance of \moduleAbbr triggers $\sbdeliver(sn, b)$ at node $n$.

During epoch $e$, all nodes that are leaders in $e$
simultaneously propose batches for sequence numbers in their corresponding segments.
\sysname multiplexes all segments into the single common log as shown in \Cref{fig:epochs-segments}.
Each node thus executes $\numLeaders{e}$ \moduleAbbr instances simultaneously,
while being a leader for at most one of them.

Epoch $e$ ends and epoch $(e+1)$ starts when all sequence numbers in $Sn(e)$ have been committed.
Nodes keep the old instances active
until all corresponding sequence numbers become part of a stable checkpoint.
This is necessary for ensuring totality (even for slow nodes which might not have actively taken part in the agreement).

\subsection{Assigning Buckets to Segments}
\label{sec:req-dupl}

\sysname partitions the request hash space into buckets
which it assigns to leaders/segments and changes this bucket assignment at epoch transitions.
At any point in time, a leader can assign sequence numbers only to requests from its assigned buckets.
This approach was first used in Mir-BFT \cite{stathakopoulou2019mir} to counter request duplication and censoring attacks.

During an epoch, the assignment of buckets to leaders is fixed.
To ensure liveness, each bucket must repeatedly be assigned to a correct leader.
To this end, \sysname re-assigns the buckets on every epoch transition as follows (illustrated in \Cref{fig:buckets}).

\begin{figure}[h]
  \centering
  \includegraphics[width=\columnwidth]{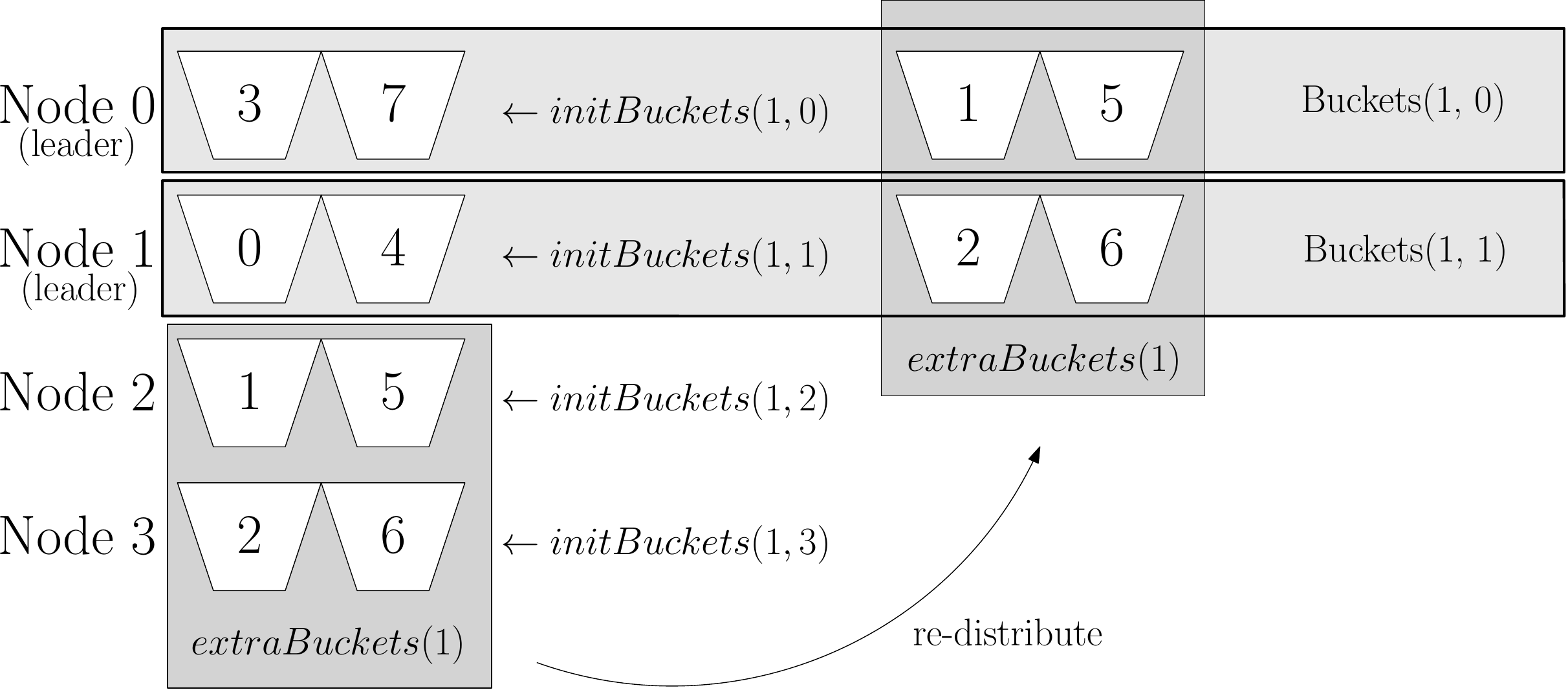}
  \caption{\label{fig:buckets} Example assignment of 8 buckets to 2 leaders in a system with 4 nodes in epoch 1.}
\end{figure}

For epoch $e$, we start by assigning an initial set of buckets to \emph{each node} (leader or not) in a round-robin way.
Let $\initBuckets{e}{i} \subseteq \mathcal{B}$ be the set of buckets initially assigned to each node $i$, $0 \leq i < n$ in epoch $e$.
We consider the buckets in $\mathcal{B}$ to be numbered, with each bucket having an integer bucket number $b \in \{0,\dots,|\mathcal{B})|-1\}$.
In the following we refer to buckets using $b$.
\begin{equation}
  \initBuckets{e}{i} = \{b \in \mathcal{B} \hspace{.4em} | \hspace{.4em} (b + e) \equiv i \textrm{ mod } n\}
  \label{eq:buckets}
\end{equation}
However, not all nodes belong to $\epLeaders{e}$.
Let $\extraBuckets{e}$ be the set of buckets initially assigned to non-leaders.

\begin{align*}
  \extraBuckets{e} = \{b \in \mathcal{B} \hspace{.4em} | \hspace{.4em} \exists i : & \hspace{.4em} i \notin \epLeaders{e} \\
                                                                                   & \hspace{.4em} \wedge b \in \initBuckets{e}{i} \}
\end{align*}

We must re-distribute those extra buckets to the leaders of epoch $e$.
We do this in a round robin way as well.
Let $l(e, k)$, $0 \leq k < \numLeaders{e}$ be the $k$-th leader (in lexicographic order) in epoch $e$.
The $\activeBuckets{e}{l(e, k)}$ of the $k$-th leader in $e$ are thus defined as follows.
\begin{align*}
  \activeBuckets{e}{l(e, k)} & = & & \initBuckets{e}{l(e, k)} \text{\ } \cup\\
  & & & \{b \in \extraBuckets{e} \hspace{.4em} |\\
  & & & (b + e) \equiv k \textrm{ mod } \numLeaders{e}\}
\end{align*}
An example bucket assignment is illustrated in \Cref{fig:buckets}.

With this approach, all buckets are assigned to leaders
and every node is eventually assigned every bucket at least through the initial bucket assignment.
\sysname ensures liveness as long as, in an infinite execution,
there is a correct node that (1) eventually stops being suspected forever by every correct node,
and (2) is assigned each bucket infinitely often.
(1) is satisfied by the properties of the eventually strong failure detector.
(2) is satisfied by the bucket re-assignment and the leader selection policies described in \Cref{sec:leader-selection}.

\Cref{fig:epochs-segments} shows how \sysname multiplexes \moduleAbbr instances to build a totally ordered log.

\begin{figure}[h!]
\centering
\includegraphics[width=0.95\columnwidth]{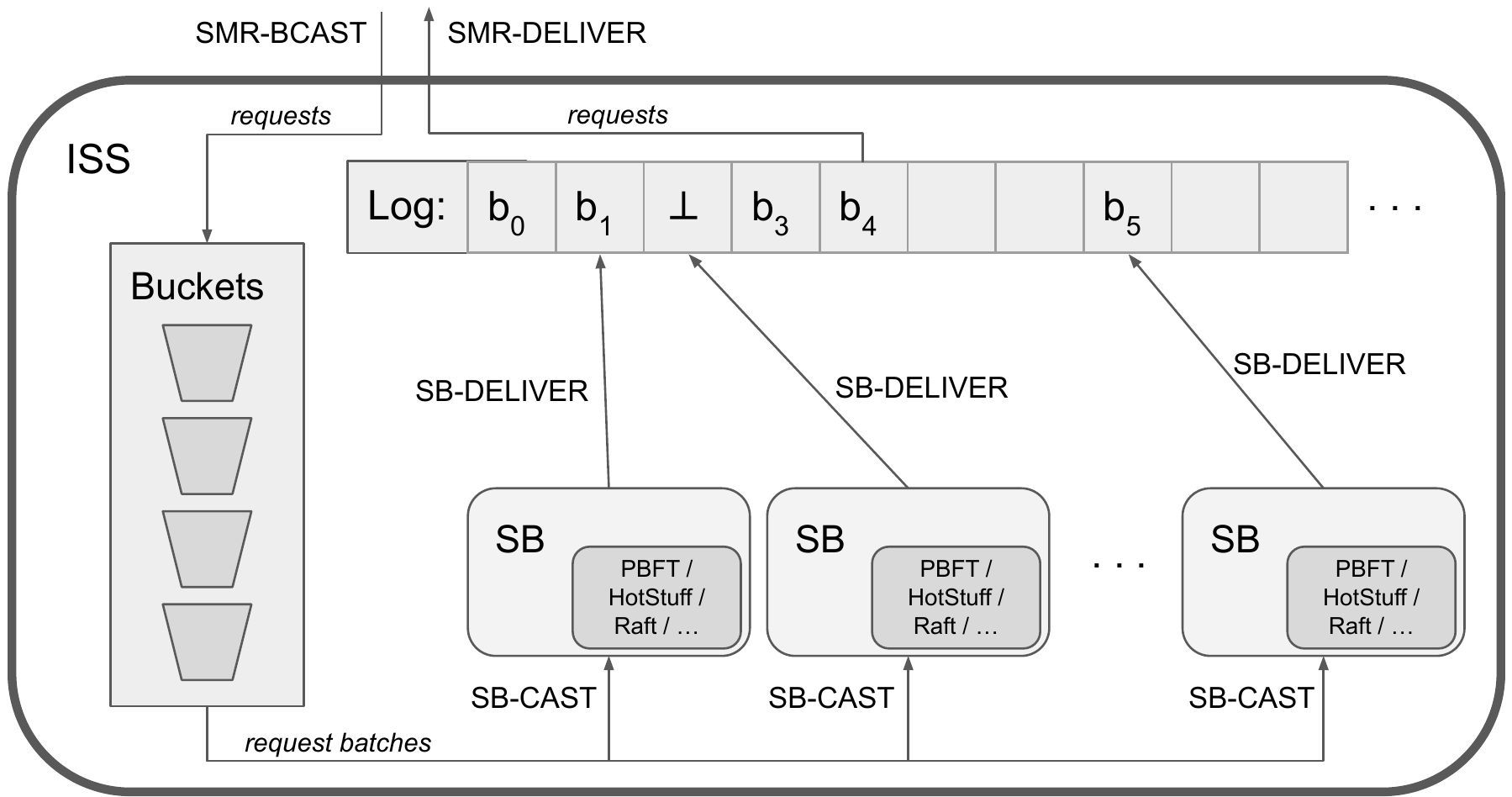}
\caption{\moduleAbbr invocation in \sysname.}
\label{fig:iss}
\end{figure}

\section{\sysname Algorithm Details}
\label{sec:details}

In this section we present the \sysname algorithm in detail.
The main high-level algorithm that produces a totally ordered log is described in \Cref{alg:top-level}.
For better readability, certain auxiliary functions and the functions related to epoch initialization
are presented separately in \Cref{alg:auxiliary,alg:epoch-init}, respectively.
We list the implementation of leader selection policies in \Cref{alg:leader-selection-policies}.
The notation $\langle i, e | args \rangle$ corresponds to an event $e$ of instance $i$ with arguments $args$.

As described in the previous sections, \sysname proceeds in epochs,
each epoch multiplexing multiple segments into a final totally ordered log.
We start with epoch number $0$ (line \ref{ln:init-epoch-0}) and an empty log (line \ref{ln:init-empty-log}).
All buckets are initially empty (line \ref{ln:init-empty-buckets}).
Whenever a client submits a new request (line \ref{ln:upon-request}),
\sysname adds the request to the corresponding bucket (line \ref{ln:request-to-bucket}).

We assume access accross all epochs to a module $D$ that implements an eventually strong failure detector, as defined in \Cref{sec:sb}.

\begin{algorithm}
  \caption{Main \sysname algorithm for node $p$}
  \scriptsize
  \label{alg:top-level}
  \begin{algorithmic}[1]
    \Implements
      \State State Machine Replication, \textbf{instance} \textit{smr}
    \EndImplements

    \Uses
      \State \modulename, \textbf{instance} \textit{\moduleAbbrSmall}($\sigma$, $S$, $M$, $d$) (multiple parametrized instances)
      \State with sender $\sigma$, sequence numbers $S$, messages $M$, failure detector \textbf{instance} $d$
      \State Eventually Strong Failure Detector \textbf{instance} $D$
    \EndUses

    \Parameters
      \State numBuckets
      \State batchTimeout
      \State epochLength
      \State maxBatchSize
    \EndParameters

    \Upon{init()}
    \State currentEpoch $\leftarrow$ 0\footnotemark
    \label{ln:init-epoch-0}
    \State segments $\leftarrow$ $\emptyset$
    \State firstUndelivered $\leftarrow 0$
    \State totalDelivered $\leftarrow 0$
    \ForAll{sn}{\mathbb{N}}
      \State log[$sn$] $\leftarrow$ $none$
      \label{ln:init-empty-log}
      \State proposed[$sn$] $\leftarrow$ $none$
      \EndForAll
      \ForAll{0 \leq b < numBuckets}{ }
        \State buckets[b] $\leftarrow$ empty bucket
        \label{ln:init-empty-buckets}
        \EndForAll
        \State initEpoch(0)
        \label{ln:init-init-0}
        \State runEpoch(0)
        \label{ln:init-run-0}
        \EndUpon

        \UponEvent{\langle smr, \bcast | req\rangle}
        \label{ln:upon-request}
        \If{valid($req$)}
          \State buckets[hash($req$)].add($req$)
          \label{ln:request-to-bucket}
        \EndIf
        \EndUponEvent

        \Function{runEpoch}{e}
          \ForAll{sn}{seqNrs(e) \textrm{ with } segOf(sn).leader = p}
            \State propose(sn)
            \label{ln:propose-call}
            \EndForAll
          \EndFunction

          \UponEvent{\langle segOf(sn).SB, \sbdeliver | (sn, batch)\rangle}
            \label{ln:module-deliver}
            \State log[$sn$] $\leftarrow batch$
            \label{ln:batch-to-log}
            \If{$batch \neq \bot$}
              \label{ln:deliver-batch}
              \ForAll{req}{batch}
                \State buckets[hash($req$)].remove($req$)
                \label{ln:remove-delivered-req}
              \EndForAll
            \ElsIf{proposed[$sn$] $\neq none$}
              \label{ln:rejected-batch}
              \State resurrectRequests(proposed[$sn$])
              \label{ln:resurrect-call}
            \EndIf
          \EndUponEvent

          \UponCondition{\forall sn \in seqNrs(\text{currentEpoch}) : \text{log}[sn] \neq none}
            \label{ln:epoch-advance-condition}
            \State initEpoch(currentEpoch + 1)
            \label{ln:init-next-epoch}
            \State runEpoch(currentEpoch + 1)
            \label{ln:run-next-epoch}
          \EndUponCondition

          \UponCondition{log[\text{firstUndelivered}] \neq none}
            \State deliver($sn$)
            \label{ln:in-order-delivery}
            \State firstUndelivered $\leftarrow sn+1$
          \EndUponCondition

            \algstore{main-pseudocode}
  \end{algorithmic}
\end{algorithm}
\footnotetext{We sloppily lists as sets and use set operators like $\cup$ or $\in$ where the meaning is clear from the context.}

\begin{algorithm}
  \caption{Auxiliary functions}
  \scriptsize
  \label{alg:auxiliary}
  \begin{algorithmic}[1]
    \algrestore{main-pseudocode}

    \Function{segOf}{sn}
      \State \textbf{return} $seg \in segments : sn \in seg.seqNrs$
    \EndFunction

    \Function{seqNrs}{e}
      \State \textbf{return} $\{i \in \mathbb{N} |$\\
      \hspace{4.5em} $e \cdot epochLength \leq i < e \cdot (epochLength + 1)\}$
    \EndFunction

    \Function{propose}{sn}
      \State \textbf{wait until} \hspace{1.5em} segments[sn].batchReady()\\
      \label{ln:propose-batch-full}
      \hspace{6em}$\OrT$batchTimeout\\
      \label{ln:propose-batch-timeout}%
      \label{ln:propose-catchup}
      \State batch $\leftarrow$ cutBatch(segOf($sn$).buckets)
      \label{ln:propose-cut-batch}
      \State \textbf{trigger} $\langle \text{seg.\moduleAbbr}, \sbcast | sn, batch \rangle$
      \label{ln:propose-broadcast}
      \State proposed[$sn$] $\leftarrow$ batch
      \label{ln:propose-track}
      \If{$batch \neq \bot$}
        \label{ln:propose-batch}
        \ForAll{req}{batch}
          \State buckets[hash($req$)].remove($req$)
          \label{ln:remove-proposed-req}
        \EndForAll
      \EndIf
    \EndFunction

    \Function{deliver}{sn}
      \State $batch\leftarrow\text{log}[sn]$
      \If{$batch\neq\bot$}
        \ForAll{req}{batch}
          \State \textbf{trigger} $\langle smr, \deliver | req, \text{totalDelivered} \rangle$
          \label{ln:smr-delivery}
          \State totalDelivered $\leftarrow$ totalDelivered $+1$
        \EndForAll
      \EndIf
    \EndFunction

    \Function{cutBatch}{\mathcal{B}}
       \State \textbf{return} a batch of the $maxBatchSize$ oldest requests in $\mathcal{B}$
       \label{ln:oldest}
    \EndFunction

    \Function{resurrectRequests}{batch}
      \ForAll{req}{batch}
        \State buckets[hash(req)].add(req)
        \label{ln:resurrect-add}
        \EndForAll
      \EndFunction

      \algstore{main-pseudocode}
  \end{algorithmic}
\end{algorithm}

\subsection{Epoch Initialization}

At the start of each epoch $e$, \sysname:
  (1) calculates $\epLeaders{e}$, the set of nodes that will act as leaders in $e$,
  based on the used leader selection policy (line \ref{ln:init-leaders}),
  (2) for each node $l$ in $\epLeaders{e}$, creates a new segment with leader $l$ (lines \ref{ln:init-new-segment} and \ref{ln:init-assign-leader}),
  (3) assigns all sequence numbers $Sn(e)$ of epoch $e$ in a round-robin way to all created segments (line \ref{ln:init-assign-sns}),
  (4) assigns buckets to the created segments as described in \Cref{sec:req-dupl} (line \ref{ln:init-assign-buckets}), and
  (5) creates an instance of \moduleAbbr for each created segment (line \ref{ln:init-module}).

  \begin{algorithm}
    \caption{Epoch initialization}
    \scriptsize
    \label{alg:epoch-init}
    \begin{algorithmic}[1]
      \algrestore{main-pseudocode}

      \Function{initEpoch}{e}
        \State currentEpoch $\leftarrow e$
        \State
        \State leaders $\leftarrow$ leader\_selection\_policy($e$)
        \label{ln:init-leaders}
        \ForAll{0 \leq l < |leaders|}{ }
          \State seg $\leftarrow$ new segment
          \label{ln:init-new-segment}
          \State seg.leader $\leftarrow$ leaders[$l$]
          \label{ln:init-assign-leader}
          \State seg.seqNrs $\leftarrow \{sn \in seqNrs(e) \hspace{.4em} |$\\
                 \hspace{9em} $sn \equiv l \textrm{ mod } |leaders|\}$
          \label{ln:init-assign-sns}
          \State seg.buckets $\leftarrow$ Buckets($e, leaders[l]$)
          \label{ln:init-assign-buckets}
          \State seg.\moduleAbbr $\leftarrow$ \moduleAbbrSmall(leaders[l], batches(seg.buckets), seg.seqNrs, D)
          \State \textbf{trigger} $\langle seg.\moduleAbbr, \sbinit\rangle$
          \label{ln:init-module}
          \State segments $\leftarrow$ segments $\cup$ $\{$seg$\}$
          \EndForAll
        \EndFunction

        \Function{Buckets}{e, i}
          \ForAll{n}{\mathcal{N}}
            \State initBuckets[n] $\leftarrow \{b < numBuckets$ $|$\\
                          \hspace{10.4em} $(e + b) \equiv n \textrm{ mod } |\mathcal{N}|\}$
            \EndForAll
            \State leaders $\leftarrow$ epLeaders\_POLICY($e$)
            \State extraBuckets $\leftarrow \{b : \exists n \hspace{.4em} |$\\
                   \hspace{8.5em} $n \notin leaders \wedge b \in initBuckets[n]\}$
            \State \textbf{return} $initBuckets[i] $ $\cup$ $\{b \in extraBuckets \hspace{.4em} |$\\
                   \hfill $(e + b) \equiv (\textrm{index of } i \textrm{ in } leaders) \textrm{ mod } |leaders|\}$
          \EndFunction

          \algstore{main-pseudocode}

    \end{algorithmic}
  \end{algorithm}

\subsection{Ordering Request Batches}
\label{sec:ordered-log}

\sysname orders requests in batches, a common technique which allows requests to be handled in parallel, which amortizes the processing cost of protocol messages, and, thereby, improves throughput.
During an epoch, every node $l$ that is the leader of a segment $s$
proposes request batches for sequence numbers assigned to $s$ (line \ref{ln:propose-call}).
$l$ does so by sb-casting the batches using the instance of \moduleAbbr associated with $s$.
Every node then inserts the sb-delivered (sequence number, batch) pairs
at the corresponding positions in its copy of the log (line \ref{ln:batch-to-log}).
We say the node \emph{commits} the batch with the corresponding sequence number since, once inserted to the log, the assignment of a batch to a sequence number is final.

\textbf{Proposing Batches.}
Each node maintains local data structures of \emph{buckets queues}, which store the received and not yet proposed or delivered requests corresponding to the respective bucket.
To propose a request batch for sequence number $sn$,
$l$ first constructs the batch using requests in the bucket queues corresponding to the buckets assigned to $s$.
To implement efficient request batching while preserving low latency,
$l$ waits until at least one of the following conditions is fulfilled:

\begin{itemize}
  \item The bucket queues corresponding to $s$ contain enough requests (more than a predefined $batchSize$) (line \ref{ln:propose-batch-full}).
  \noindent\item A predefined time elapses since the last proposal (line \ref{ln:propose-batch-timeout}).
  Under low load, this condition sets an upper bound on the pending%
  \footnote{Latency from the moment a request is received until it is added in a batch.}%
  latency of requests waiting to be proposed, even if the batch is filling up slowly.
\end{itemize}

$l$ then constructs a $batch$ using up to $batchSize$ requests
(line \ref{ln:propose-cut-batch}),
removes those requests from their bucket queues
and proposes the batch by invoking $\sbcast(sn, batch)$ on the \moduleAbbr instance associated with $s$ (line \ref{ln:propose-broadcast}).

Every leader also keeps track of the values it proposed for each sequence number (line \ref{ln:propose-track}).
This, as we explain later, is important in the case of asynchrony.

\textbf{Assembling the Final Log.}
Whenever any instance of \moduleAbbr (belonging to any segment) delivers
a value associated with a sequence number (line \ref{ln:module-deliver}) at some node $n$,
$n$ inserts the delivered value at position $sn$ of the log (line \ref{ln:batch-to-log}).
If a request batch $\neq\bot$ has been delivered (line \ref{ln:deliver-batch}),
$n$ removes the contained requests from their corresponding bucket queues (line \ref{ln:remove-delivered-req})
to avoid proposing them again in a later epoch.

Note that bucket queues are local data structures at each node, and thus each node manages its bucket queues locally.
Nodes add all requests they obtain from clients to their local bucket queues,
but only propose batches constructed from the queues corresponding to the buckets assigned to their segments.
Consiquently, a node $n$ delivers batches containing requests mapping to other buckets than those $n$ uses for proposing.
Therefore, to avoid request duplication across epochs, each node must remove all delivered requests from its local bucket queues.

If the special value $\bot$ has been delivered by \moduleAbbr
and, at the same time, $n$ itself had been the leader proposing a batch for $sn$ (line \ref{ln:rejected-batch}),
$n$ ``resurrects'' all requests in the batch it had proposed (line \ref{ln:resurrect-call})
by returning them to their corresponding bucket queues (line \ref{ln:resurrect-add}).
This scenario can appear in the case of asynchrony / partitions, where a correct leader is suspected as faulty after having proposed a batch.
Such a leader must return the unsuccessfully proposed requests in their bucket queues and,
if batches with those requests are not committed by other nodes in the meantime,
retry proposing them in a later epoch where it is again leader of a segment with those buckets.

A node considers the ordering of a request finished when it is part of a committed batch with an assigned sequence number $sn$ and the log contains an entry for each sequence number $sn' \leq sn$.

Each request is delivered with a unique sequence number $sn_r$ denoting the total order of the request.
$sn_r$ is derived from the sequence number of the batch in which the request is delivered
and the position of the request in the batch.
Let $\mathcal{S}_{sn}$ be the number of requests in a batch delivered with sequence number $sn$
and let $r$ be the $k^{th}$ request in this batch.
For each such request $r$, \sysname outputs $\deliver(r,sn_r)$ where:
\begin{equation}\label{eq:requests}
  sn_r= k+\sum_{i=0}^{sn-1}\mathcal{S}_{i}
\end{equation}

\subsection{Advancing Epochs}
\label{sec:epochs}

\sysname advances from epoch $e$ to epoch $e+1$ when the log contains an entry for each sequence number in $Sn(e)$ (line \ref{ln:epoch-advance-condition}).
This will eventually happen for each epoch at each correct node due to \moduleAbbr \textit{Termination}.
Only then does the node start processing messages related to epoch $e+1$
and starts proposing batches for sequence numbers in $Sn(e+1)$
(lines \ref{ln:init-next-epoch} and \ref{ln:run-next-epoch}).

Requiring a node to have committed all batches in epoch $e$ before proposing batches for $e+1$
prevents request duplication across epochs.
When a node transitions from $e$ to $e+1$, no requests are ``in flight''---%
each request has either already been committed in $e$ or has not yet been proposed in $e+1$.

\subsection{Selecting Epoch Leaders}
\label{sec:leader-selection}

In order to guarantee that each request $r$ submitted by a correct client is ordered (liveness),
we must ensure that, eventually, there will be a segment in which $r$ is committed.
As implied by the specification of \moduleAbbr, this can only be guaranteed if a correct leader $p$ proposes a batch containing $r$
and the failure detector does not suspect $p$ until $r$ is committed.
The choice of epoch leaders is thus crucial.

\sysname selects leaders according to a \emph{leader selection policy},
a function known to all nodes that, at the end of each epoch $e$,
determines the set of leaders for epoch $(e + 1)$.

In this work, we describe leader selection policies that are evaluated locally, such that each node need only use, in a deterministic fashion, information guaranteed to be present at all nodes.
Evaluation of the policy can, in principle, involve communication among the nodes,
but we focus on simpler, locally evaluable policies that only take into account the epoch number $e$ and the state of the log up to $\maxSnE{e}$.

In order  to guarantee liveness of the system, the leader selection policy must ensure, for each bucket $b$,
that, in an infinite execution, $b$ will be assigned infinitely many times to a segment with a correct leader
that is not suspected by the failure detector.
Weak eventual accuracy (see \Cref{sec:sb}) guarantees that there exists such a leader.

Many different leader selection policies ensure liveness, while posing different trade-offs with respect to performance.
In the following, we present a few example policies.

\paragraph{SIMPLE}
The SIMPLE leader selection policy always selects all nodes to be leaders in each epoch.
Faulty leaders might inhibit progress of the requests which map to the buckets assigned to their segments,
but all correct nodes stay in the leaderset forever, and thus, thanks to bucket re-assignment,
each bucket will be assigned to all correct leaders infinitely many times.

This policy results in always fully utilizing all correct nodes' resources, which optimizes the maximal system throughput at saturation in a fault-free execution.
When, however, one or more nodes are faulty, they may delay epoch transitions, even after the correct leaders finish assigning batches to all sequence numbers in their corresponding segments.
Thus, the SIMPLE policy can have negative impact on request latency and on average throughput in presence of node failures.
If the correct nodes have enough resources to handle all the incoming requests,
it is beneficial to exclude misbehaving nodes from the leaderset to prevent them from unnecessarily delaying requests.

\paragraph{BACKOFF}
The BACKOFF leader selection policy is similar to the SIMPLE policy
in the sense that all suspected nodes are eventually re-included in the leaderset.
Instead of re-including $i$ immediately in the next epoch, the BACKOFF policy bans $i$ from the leaderset for a certain number of epochs.
The ban period is doubled each time $i$ is suspected and decreases linearly when $i$ behaves correctly for a whole epoch after its re-inclusion.
Liveness is ensured by the same principle as with the SIMPLE policy,
as even with BACKOFF, all correct nodes will eventually be included in the leaderset forever.

The BACKOFF policy improves on the latency of SIMPLE policy,
but pays the price of sub-optimal throughput in presence of transient problems that lead to suspecting correct leaders.
The BACKOFF policy can exclude a large number of nodes, even more than $f$, in which case at least some of them must be correct.
In the extreme case, the BACKOFF policy can result in epochs where all leaders are banned (in this corner case, \sysname simply skips such epochs).%
\footnote{Skipping epochs that have no leaders is not reflected in the pseudocode for simplicity of presentation.}

\paragraph{BLACKLIST}
The BLACKLIST leader selection policy never excludes more than $f$ nodes from the leaderset.
It maintains a blacklist of up to $f$ most recently suspected nodes and uses the remaining (at least) $2f + 1$ nodes as leaders.
It does not guarantee that all correct nodes will eventually become leaders forever
but as long as this is the case for at least one correct node, liveness is still preserved.
With BLACKLIST, the leaderset contains at least $f + 1$ correct nodes in each epoch.
The BLACKLIST policy corresponds to the blacklisting mechanism used in BFT-Mencius~\cite{BFT-Mencius}.

BLACKLIST is the asymptotically optimal policy for the case where all the $f$ \emph{tolerated} node failures actually manifest.
In such a case, all faulty nodes will be eventually excluded forever, and all correct nodes will be leaders.
However, in less severe failure scenarios, where less than $f$ nodes are \emph{actually faulty},
correct nodes might be excluded from the leaderset forever only due to transient network asynchrony / partitions.

This list of policies is not exhaustive.
Moreover, the discussion above hints that there is no policy that is optimal for all scenarios.
In the rest of this work, the reader should assume the BLACKLIST policy as it is the \emph{default} \sysname leader policy throughout the paper unless mentioned otherwise.

\begin{algorithm}
  \caption{Leader selection policies}
  \scriptsize
  \label{alg:leader-selection-policies}
  \begin{algorithmic}[1]
    \algrestore{main-pseudocode}

    \Parameters
      \State banPeriod
      \State c
      \Comment{constant decrease of ban period}
    \EndParameters

    \Function{epLeaders\textrm{\_SIMPLE}}{e}
      \State \textbf{return} sorted($\mathcal{N}$)
    \EndFunction

    \Function{epLeaders\textrm{\_BLACKLIST}}{e}
      \State failures $\leftarrow \emptyset$
      \ForAll{n}{\mathcal{N}}
        \State failures $\leftarrow lastFailure(n,e)$
      \EndForAll

        \State leaderCount $\leftarrow 0$
        \ForAll{n}{\mathcal{N}}
          \If{$lastFailure(n,e)$ not among the $f$ highest values in failures}
            \State leaders[$leaderCount$] $\leftarrow n$
            \State leaderCount $\leftarrow$ leaderCount + 1
          \EndIf
          \EndForAll
          \State \textbf{return} leaders
      \EndFunction

        \UponCondition{\forall sn \in seqNrs(\text{e}) : \text{log}[sn] \neq none}
        \Comment{Epoch e finished}
          \ForAll{n}{\mathcal{N}}
            \If{$suspect(n,e)$}
              \If{penalty[n]$>0$}
                \State penalty[n] = penalty[n]*$2-1$
              \Else
                \State penalty[n] = banPeriod
              \EndIf
            \Else
              \If{penalty[n]$>0$}
                \State penalty[n] = penalty[n]$-c$
              \EndIf
            \EndIf
          \EndForAll
        \EndUponCondition

        \Function{epLeaders\textrm{\_BACKOFF}}{e}
          \State leaderCount $\leftarrow 0$
          \ForAll{n}{\mathcal{N}}
            \If{penalty[n]$\le0$}
              \State leaders[$leaderCount$] $\leftarrow n$
              \State leaderCount $\leftarrow$ leaderCount + 1
            \EndIf
          \EndForAll
          \State \textbf{return} leaders
        \EndFunction

        \Function{lastFailure}{n,e}
        \Comment{Returns the highest seq no node $n$ failed to deliver}
          \State failures $\leftarrow \{sn \leq \max(seqNrs(e-1)): segOf(sn).leader = n$ $\wedge$ $\log[sn] = \bot\}$
          \If {$failures = \emptyset$}
            \State \textbf{return $-1$}
          \EndIf
          \State \textbf{return} max($failures$)
        \EndFunction

        \Function{suspect}{n,e}
          \State \textbf{return} $lastFailure(n,e) \in seqNrs(e)$
        \EndFunction
  \end{algorithmic}
\end{algorithm}

\subsection{Checkpointing and State Transfer}
\label{sec:checkpoint}
\sysname implements a simple checkpointing protocol.
Every node $i$, in each epoch $e$, when the log contains an entry for each sequence number in $Sn(e)$, broadcasts a signed message \\
$\langle\checkpoint,\linebreak[1] max(Sn(e)),\linebreak[1] D(e),\linebreak[1] \sigma_i \rangle$, where $D(e)$ is the Merkle tree root of the digests of all the batches in the log with sequence numbers in $Sn(e)$.
Upon acquiring a strong quorum of $2f+1$ matching \checkpoint messages with a valid signature against the sender node's public key, node $i$ creates a stable checkpoint\\
 $\langle\stableCheckpoint,\linebreak[1] max(Sn(e)),\linebreak[1] \pi(e) \rangle$, where $\pi(e)$ is the set of $2f+1$ signatures on the \checkpoint messages.
At this point, $i$ can garbage collect all segments of epoch $e$.

When a node $i$ has fallen behind, for example when the node starts receiving messages for a future epoch, the node performs a state transfer, i.e., $i$  fetches the missing log entries along with their corresponding stable checkpoint which proves the integrity of the data.

\sysname checkpointing is orthogonal to any checkpointing and state transfer mechanism pertaining to the \moduleAbbr implementation because \moduleAbbr instances must terminate independently.

\subsection{Membership Reconfiguration}
\label{sec:reconfiguration}
A detailed membership reconfiguration protocol is outside the scope of this paper.
However, we outline a solution.
Thanks to \moduleProperty~3 (Termination), all correct nodes eventually deliver a value for each sequence number of an epoch.
Moreover, thanks to \smrProperty~2 (Agreement), all correct nodes assemble the same log at the end of the epoch.
Therefore, the log at the end of the epoch can be used to deterministically make decisions for the next epoch, including decisions about nodes and clients joining/leaving the set of system processes.
Such a decision can be based, for example, on a flagged reconfiguration request proposed by a manager process~\cite{Reiter96membership} which becomes part of the log.

\subsection{Request Handling}
\label{sec:requests}

As we defined in \Cref{sec:model}, a request $r=(o,id)$ with payload $p$ and identifier $id=(t,c)$ is wrapped in a message $\langle \request, r \rangle_{\sigma_c}$.
Our implementation represents the client identifier $r.id.c$ with an integer which it associates to the client's $c$ public key.
The signature is calculated over the request identity $r.id$ and payload $r.o$ to guarantee integrity and authenticity.

Similarly to Mir-BFT~\cite{stathakopoulou2019mir}, clients can submit multiple requests in parallel within a client watermark window, i.e., a contiguous set for the per-client request sequence number $r.id.t$.
\sysname advances all clients' watermark widows at the end of each epoch.

Each correct node,  upon receiving a valid request, adds the request, based on its identifier, to the corresponding bucket queue.
A request is considered \emph{valid} if:
(1) it has a valid signature
(2) the public key corresponds to a client in the client set $C$ of the system, and
(3) is within the client watermarks.
Bucket queues are idempotent, i.e., each correct node adds a request to the corresponding bucket queue exactly once.
Moreover, the bucket queue implementation maintains a FIFO property to guarantee liveness with the oldest request always being proposed first.

Requests are uniformly distributed to buckets using modulo as a hash function.
With $|\mathcal{B}|$ denoting the total number of buckets and $||$ denoting concatenation, each request $r$  maps to a bucket $b$:
\begin{equation*}
  b = r.id.c || r.id.t \textrm{ mod } |\mathcal{B}|
\end{equation*}

Importantly, we exclude the payload of the request from the bucket mapping function
to prevent malicious clients from biasing the uniform distribution.
In a permissioned system, the client cannot assume different identities and may only bias the outcome of the hash function by choosing the request sequence number.
However, we limit the available sequence numbers for each client, and therefore their ability to bias the request distribution to buckets, with the client watermarking mechanism.

\noindent\textbf{Request execution.}
\sysname is oblivious to the payload of the requests for general applicability.
Execution is not part of \sysname; however, it can be coupled with any application that requires a total order of requests.
Moreover, execution against a state machine is straight-forward.
A request, as part of a batch, is considered part of the log (and can be therefore executed) once all previous batches are added to the log (see \Cref{sec:ordered-log}).
Therefore, a request can be executed against the state machine once it is added to the log.
This does not require the epoch, in which the request is added to the log, to finish.

\section{\sysname Implementation}
\label{sec:implementation}

We implement \sysname in Go, using gRPC for communication with TLS on
nodes with two network interfaces: one for client-to-node
and one for node-to-node communication.

Our implementation is highly concurrent: multiple threads are handling incoming client requests,
verifying request signatures, sending / receiving messages, and executing various sub-protocols
such as checkpointing and fetching missing protocol state.
Each \moduleAbbr instance also executes in its own thread.
A separate thread orchestrates all of the above.

In the rest of this section,
we discuss the operating principle of \sysname{}s implementation and
the interface between \sysname and the \moduleAbbr implementation (\ref{sec:impl-interface}),
our implementation of the ordering subprotocols (\ref{sec:impl-protocols}),
the implementation of the failure detector module (\ref{sec:impl-fd}),
the interaction between \sysname and its clients (\ref{sec:impl-clients}),
and crucial technical aspects for achieving robustness and high performance (\ref{sec:impl-misc}).

\subsection{Operating Principle and Interface to \moduleAbbr}
\label{sec:impl-interface}

Our \sysname implementation is composed of multiple modules.
This section will focus on \sysname' core modules of interest, the \emph{Manager} and the \emph{Orderer},
whose interaction drives the system's operation.

\textbf{The Manager} module implements the high-level logic of epochs, segments, leaders, and buckets.
Based on the state of the log (i.e., batches that have been committed),
the Manager advances between epochs and, in each epoch,
creates that epoch's segments, assigns leaders and buckets to them, and announces them to the Orderer.

\textbf{The Orderer} module processes segments announced by the Manager.
For each segment, the Orderer instantiates an implementation of the \moduleAbbr protocol
parametrized according to the segment.
We integrate different ordering protocols
by providing different implementations of the Orderer module.

A simple interface between the Orderer and the rest of the system encapsulates the ordering protocol.
The Orderer is required to implement a \emph{Segment(s)} operation that can be called by the Manger.
When the Manager issues a segment $s$, it invokes \emph{Segment(s)} at the Orderer.
$s$ contains all information relevant to the segment such as the set of sequence numbers,
the associated buckets, and the leader.
On reception of the segment, the Orderer creates an instance of the ordering protocol
responsible for assigning request batches to all sequence numbers of $s$.

\sysname provides an \emph{Announce(b, sn)} operation to the Orderer.
The Orderer invokes it to announce a request batch $b$ committed at sequence number $sn$.
After the Manager invokes \emph{Segment(s)} at the Orderer,
the Orderer's only responsibility is to invoke $Announce(b, sn)$
exactly once for each sequence number $sn$ in $s$,
with $b$ consisting only of (unique) requests from buckets assigned to $s$.

Our \sysname implementation augments this core \emph{Segment(s)}, \emph{Announce(b, sn)} interface by
operations for interaction with other modules
such as sending and receiving messages, state transfer or parallel request signature verification.
We omit those technical details for clarity.


\subsection{\moduleAbbr Implementation}
\label{sec:impl-protocols}

In this section we discuss (1) how we implement \moduleAbbr with different leader-driven consensus protocols (PBFT, HotStuff, Raft)  and (2) adaptations of those protocols critical for \sysname performance.

All protocol implementations adhere to the following common design principles:

\noindent1.We initialize the protocol such that the the first protocol leader is the segment leader (dedicated \moduleAbbr sender) and all other nodes of the system participate as followers. \\
\noindent2. After a leader-change, any new leader (including the segment leader if it becomes leader again),
only proposes $\bot$ values for any sequence number not initially proposed by the segment leader.%
\footnote{Enforcing $\bot$ values is necessary for \moduleAbbr to be implementable. Otherwise, both \moduleAbbr \textit{Integrity} and \moduleAbbr \textit{Termination} cannot be satisfied at the same time.} \\
\noindent3. A follower accepts a proposal only if
(a) all requests in the batch are valid according to~\Cref{sec:requests},
(b) no request in the batch has previously been proposed in the same epoch or committed in a previous epoch
(c) all requests belong to the buckets of the segment,
and (d) either the segment leader sb-casts the proposal, or the proposal is $\bot$.

Our implementation currently supports two Byzantine fault-tolerant total order broadcast protocols (PBFT and HotStuff) and a crash fault-tolerant one (Raft).

\subsubsection{PBFT}

We follow the PBFT protocol as described by Castro and Liskov~\cite{Castro:2002:PBF}, with a few adaptations.

Our implementation need not deal with timeouts at the granularity of single requests, as PBFT does.
To prevent censoring attacks (and thus ensure liveness), a PBFT replica initiates a view change
if any request has not been committed for too long.
Since \sysname prevents censoring attacks by bucket re-assignment,
it is sufficient for us to make sure to commit \emph{some batch} before a timeout fires
and reset this timer when committing any batch.
In the absence of incoming requests, the primary periodically proposes an empty batch to prevent a view change.
Moreover, for simplicity, we implement view-change with signatures according to Castro and Liskov~\cite{castro1998practical}.

\subsubsection{HotStuff}
We implement chained HotStuff according to Yin \textit{et. al}~\cite{hotstuff} with BLS~\cite{boneh2001short} threshold signatures using DEDIS library for Go~\cite{dedis}.

In our implementation each batch corresponds to a HotStuff command, and each segment sequence number corresponds to a HotStuff view.
Each segment is implemented as a new HotStuff instance with a new root certificate $QC_0$.
To ensure that all sequence numbers can be delivered, i.e., to ensure that we can always ``flush'' the pipeline of chained HotStuff, we  extend the segment with 3 dummy sequence numbers corresponding to dummy empty batches which are not added to the \sysname log.
\Cref{fig:hotstuff} demonstrates an example of a segment with 3 sequence numbers.

\begin{figure}[h]
\centering
  \includegraphics[width=\columnwidth]{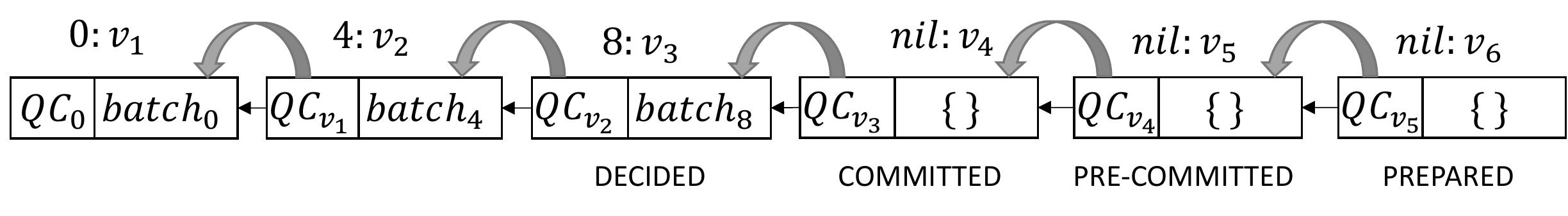}
  \caption{\label{fig:hotstuff}
  Chained HotStuff execution for a segment with 3 sequence numbers $\{0,4,8\}$. When view $v_6$ is prepared, $v_3$ is decided and $batch_8$  can be added to the log.
  $QC_{v_i}$ is the HotStuff quorum certificate for view $i$ (e.g., a threshold signature from $2f+1$ followers) on the proposal of view $v_i$.
  }
\end{figure}

\subsubsection{Raft}

Briefly, in Raft nodes set a random timer within a configurable range, which they reset every time they receive a message from the elected leader.
If the timer fires, the node advances to a new term (similar to PBFT view) and enters an election phase as a candidate leader.
An elected leader periodically sends append-entry requests for new values, possibly also empty, as a heartbeat.
The leader collects responses according to which it might resend to the followers any value they declare as missing.

We implement Raft according to~\cite{RAFT} with minor adaptations.
We fix the first leader to be the leader of the segment skipping the election phase.
Until the end of the segment, the leader periodically sends append-entry requests containing batches.
The leader continues to send empty append-entry requests until the end of the epoch to guarantee that enough nodes have added all the batches of the segment to their log.

Raft~\cite{RAFT} authors suggesst a fixed range for the random timer, based on empirical data.
In our implementation, to ensure liveness under the assumption of eventual synchrony, following the approach of PBFT,
we increase the time window for leader election.
In particular, in case  a node fails to elect a leader before advancing to a new term we double the minimum and maximun value of the random timer range.

  \subsubsection{Failure Detector Implementation}
  \label{sec:impl-fd}
  So far we have assumed that $\moduleAbbr$ has access to a failure detector module
  as defined in \Cref{sec:sb}.
  In practice, we extract the failure detector functionality from the underlying
  protocols we use to implement $\moduleAbbr$.
  All the aforementioned protocol implementations support a failure detection mechanism
  based on timeouts.

  In particular, in PBFT nodes suspect the primary (leader), if some request is not committed before a timeout.
  The timeout increases if no progress can be made.
  Under the assumption of synchrony after GST, the timeout will become longer than the network delay, and, therefore,
  a quiet primary will be suspected by every correct node (satisfying $\diamond S(bz)$ Stong Completeness)
  and no correct primary will be suspected by any correct node (satisfying $\diamond S(bz)$ Eventual Weak Accuracy).
  Similarly, in HotStuff the $\diamond S(bz)$ properties are satisfied by the Pacemaker mechanism.
  Finally, our Raft implementation satisfies $\diamond S(bz)$ properties by increasing the the time window for leader election.

\subsection{Interaction with Clients}
\label{sec:impl-clients}

Clients submit requests to \sysname by sending request messages to nodes.
To guarantee Liveness (\Cref{sec:model}, \smrProperty4) a client must ensure that at least one correct leader eventually receives the request.
A trivial solution is to send the request to all nodes.
However, \sysname implements an optimistic a \emph{leader detection} mechanism to (1) help the clients find the correct leader for each request faster and (2) better load balance request processing among the nodes.

At any point in time, the bucket to which a particular request belongs is assigned to a single segment with a unique leader.
Therefore, the client only needs to send its request to the node currently serving as a leader for the corresponding bucket.

When a node delivers a request $r$, as described in~\Cref{sec:ordered-log}, it sends a response message to the client that submitted $r$.
When the client obtains a quorum of responses, it considers the request delivered.

\sysname keeps the clients updated about the assignment of buckets to leaders.
At each epoch transition, all  nodes send a message with the assignment for the next epoch to all clients.
A client accepts such a message once it receives it from a quorum of nodes.
The client submits all subsequent requests for this epoch to the appropriate node.
Moreover, it resubmits all requests submitted in the past that have not yet been delivered.
This guarantees that all correct nodes will eventually receive the request, which ensures liveness.

To make sure that, in most cases, a leader already has a request when it is that leader's turn to propose it,
the client sends its request to two additional nodes
that it projects to be assigned the corresponding bucket in the next two epochs.


\subsection{Important Technical Aspects}
\label{sec:impl-misc}

We now describe several aspects of our implementation that
are fundamental for ensuring \sysname{}s robust performance.

\subsubsection{Rate-limiting Proposals in PBFT}
PBFT's ability to send proposals in parallel is instrumental for achieving high throughput.
However, as soon as a load spike or a temporary glitch in network connectivity occurs
(as it regularly does on the used cloud platform),
the leader can end up trying to send too many batches in parallel.
Due to limited aggregate bandwidth, \emph{all} those batches will take longer to transfer,
triggering view change timeouts at the replicas.

We address this issue by setting a hard limit on the rate of sending batches ``on the wire'',
allowing (the most part of) a batch to be transmitted before the transfer of the next batch starts.
This measure limits peak throughput but is effective at protecting against unnecessary view changes.

\subsubsection{Concurrency Handling}

A naive approach to handling requests,
where each client connection is served by a thread that, in a loop,
receives a request, locks the corresponding bucket queue, adds the request, and unlocks the bucket queue,
is detrimental to performance. We attribute this to cache contention on the bucket queue locks.

Still, access to a bucket does have to be synchronized, as adding (on reception) and removing (on commit) requests
must happen atomically.
At the same time, an efficient lock-free implementation of a non-trivial data structure such as a bucket queue
could be a research topic on its own.

We thus dedicate a constant, limited number of threads (as many as there are physical CPU cores)
to only adding requests to bucket queues, such that each bucket queue is only accessed by one thread,
removing most of the contention.
The network-handling threads pass the received requests to the corresponding bucket-adding threads
using a lock-free data structure optimized for this purpose (a Go channel in our case).

\subsubsection{Deployment, Profiling, and Analysis}

\sysname comes with tools for automating the deployment of hundreds of experiments
across hundreds of geo-distributed nodes on the cloud and for analyzing their outputs.
They profile (using \texttt{pprof}) the execution at each node,
pinpointing lines of code that cause stalling or high CPU load.
For example, the above-mentioned cache contention was pointed to by the profiler.
They also plot various metrics over time,
such as the size of proposed batches, commit rate, or CPU load.
Automatic exploration of the multi-dimensional parameter space
proved essential for understanding the inner workings of the system.

Using hundreds of cloud machines with hourly billing
also incurs significant cost.
Automatically commissioning cloud machines only for the time strictly necessary to run our experiments
and releasing those resources as soon as possible most likely saved thousands of dollars.

\section{Correctness}
\label{sec:correctness}
\subsection{From Consensus to \moduleAbbr}
\label{sec:sb-proof}
In this section we prove that we can implement \moduleAbbr with consensus in our system model (\Cref{sec:model}).

In \Cref{alg:sb-impl} we implement \moduleAbbr using Byzantine reliable broadcast (BRB),
$\Diamond S(bz)$ failure detector~\cite{malkhi1997unreliable} and Byzantine consensus (BC).
In \Cref{sec:sbz-impl} we outline how $\Diamond S(bz)$ can be implemented using timeouts and BRB.
BRB can be implemented under asynchronous communication assumptions~\cite{Bracha:1985:ACB:4221.214134}, unlike BC~\cite{FLP}.
BRB is, therefore, weaker than BC.
Therefore, \moduleAbbr is not stronger than (can be implemented using) BC.

For completeness we repeat below Byzantine reliable broadcast and Byzantine consensus definitions, the properties of which we then use to prove \moduleAbbr properties.
Our definitions follow that of Cachin et \textit{al.}~\cite{CachinGR11}.

\subsubsection{Byzantine Reliable Broadcast}

A process (node) $p$ reliably broadcasts a message (we write brb-casts), including including itself, by triggering
$\langle \rbcast | m \rangle$ and all correct process brb-deliver the message by triggering an event $\langle \rbdeliver | p, m \rangle$ such that the following properties hold:

\noindent \textbf{\rbProperty1 No duplication:}
Every correct process brb-delivers at most one message.\\
\noindent \textbf{\rbProperty2 Integrity: }
If some correct process delivers a message $m$ with sender $p$ and process $p$ is correct,
then $m$ was previously broadcast by $p$.\\
\noindent \textbf{\rbProperty4 Validity:}
If a correct process $p$ brb-casts a message $m$, then every correct process eventually brb-delivers $m$.\\
\noindent \textbf{\rbProperty5 Consistency:}
If some correct process $p$ brb-delivers a message $m$  and some correct process $q$ brb-delivers a message $m'$, then $m=m'$.\\
\noindent \textbf{\rbProperty6 Totality:}
If some message is delivered by any correct process, then every correct process eventually delivers a message.

\subsubsection{Byzantine Consensus}
A set of processes (nodes) use consensus to decide on a common value out of values they initially propose.

Formally, we write $\langle \bcpropose | v \rangle$ to denote that a process proposes value $v$
and $\langle \bcdecide | v \rangle $ to denote that a process outputs a decided value $v$.
Consensus in the BFT model (BC) guarantees the following properties:

\noindent \textbf{\bcProperty1 Termination: }
Every correct process decides on some value.\\
\noindent \textbf{\bcProperty2 Agreement:}
All correct processes decide on the same value.\\
\noindent \textbf{\bcProperty3 Integrity:}
No correct process decides twice.\\
\noindent \textbf{\bcProperty4 Validity:}
If all correct processes propose the same value $v$, then they decide on $v$;
otherwise, a correct process may only decide on a value that was proposed by some correct process
or the special value $\bot$.

$\bot$ indicates that the processes could not agree on any of the proposed values.
This could occur if not all correct process propose the same value, allowing correct process to decide on some value even if not all processes are correct (Strong Validity).

\subsubsection{$\Diamond S(bz)$ Implementation}
\label{sec:sbz-impl}
We assume that all correct processes maintain a timer for all other processes.
Moreover, each correct process $i$ periodically brb-casts a heartbeat message
$\langle \rbcast | h \rangle$.
Upon an event $\langle \rbdeliver | j, h \rangle$, a correct process $i$ restarts the timer for a process $j$ and triggers a $\langle\restore | j \rangle$ event if $j$ is in the list of suspects.
When the timer for process $j$ expires, a correct process $i$ triggers an event $\langle\suspect | j \rangle$ to add $j$ in the list of suspects if it is not there yet and doubles the timeout period for $j$.

We now show that this algorithm implements $\Diamond S(bz)$ with BRB in a partially synchronous system.
\paragraph{Strong Completeness.}
A quiet node $j$ by definition does not brb-cast messages, therefore, a correct node $i$ permanently suspects $j$ after the first timer for $j$ expires.

\paragraph{Eventual Weak Accuracy.}
After GST we assume that the communication is synchronous.
I.e., there is a bound $\Delta$ on network delay.
Therefore, the timeout period for any correct node $j$ will eventually be long enough, such that all correct nodes always brb-deliver $j$'s  heartbeats.

\subsubsection{\moduleAbbr Implementation}
The implementation of $\moduleAbbr(\sigma, S, M, D)$ is as follows.

Whenever an event $\langle \sbcast | sn, m \rangle$ is triggered, the dedicated sender $\sigma$ brb-casts a message form $m$ with $sn$.
All correct nodes run consensus for each sequence number in $S$ to decide if they sb-deliver a message brb-cast by $\sigma$ or if $\sigma$ is quiet.
In the first case they propose the message they brb-delivered form $\sigma$.
If $\sigma$ is suspected as quiet they abort, i.e. they propose $\bot$ for every sequence number in $S$ for which nothing has been proposed yet.
Note that a correct node aborts only if the \moduleAbbr instance is initialized.
Otherwise, if $\sigma$ is suspected before the intialization event, a correct node could abort permaturely
even though, by the time the instance is initialized, $\sigma$ would have been removed from the list of suspects.
Moreover, upon initialization, all correct nodes should abort if $\sigma$ is already in the list of suspects.
Otherwise, the failure detector may not suspect $\sigma$ again, preventing \moduleAbbr from teminating.

\begin{algorithm}
  \caption{\moduleAbbr implementation with consensus}
  \scriptsize

  \label{alg:sb-impl}
  \begin{algorithmic}[1]
    \Implements
      \State \modulename, \textbf{instance} \textit{\moduleAbbrSmall}, with sender $\sigma$, sequence numbers $S$, messages $M$,
      \State Eventually Strong Failure Detector \textbf{instance} $D$
    \EndImplements
    \Uses
      \State Byzantine Reliable Broadcast, \textbf{instance} \textit{brb}[$sn$] for each $sn\in S$
      \State Byzantine Consensus, \textbf{instance} \textit{bc}[$sn$] for each $sn\in S$
      \State $proposed\leftarrow \emptyset$
      \State $initialized\leftarrow False$
    \EndUses

    \UponEvent{\langle \textit{\moduleAbbrSmall}, \sbinit \rangle}
      \State $initialized\leftarrow True$
      \If{$\sigma\in D.suspected$}
        \label{ln:sb-ealy-suspect}
        \State abort()
      \EndIf
    \EndUponEvent

    \UponEvent{\langle \textit{\moduleAbbrSmall}, \sbcast | sn, m \rangle}
      \label{ln:sb-cast}
      \State \textbf{trigger} $\langle  \textit{brb}[sn], \rbcast | m \rangle$
    \EndUponEvent

    \UponEvent{\langle \textit{brb}[sn], \rbdeliver | m \rangle}
      \label{ln:rb-deliver}
      \State $proposed\leftarrow proposed\cup\{sn\}$
      \State \textbf{trigger} $\langle \textit{bc}[sn], \bcpropose | m \rangle$
    \EndUponEvent

    \UponEvent{\langle D, \suspect | p \rangle}
      \If{$initialized$}
        \label{ln:sb-suspect}
        \State abort()
      \EndIf
    \EndUponEvent

    \UponEvent{\langle \textit{bc}[sn], \bcdecide | v \rangle}
      \State \textbf{trigger} $\langle \textit{\moduleAbbrSmall}, \sbdeliver | sn,v \rangle$
      \label{ln:sb-deliver}
    \EndUponEvent

    \Process{abort}{ }
      \ForAll{sn}{S}
      \label{ln:sb-StartForAllInS}
        \If{$sn \notin proposed$}
          \State $\langle \textit{bc}[sn], \bcpropose | \bot \rangle$
        \EndIf
      \EndForAll
      \label{ln:sb-EndForAllInS}
    \EndProcess
  \end{algorithmic}
\end{algorithm}

\subsubsection{\moduleAbbr implementation correctness}
Below we show that \Cref{alg:sb-impl} satisfies the \moduleAbbr properties.

\noindent \textbf{\moduleProperty1 Integrity: }
If a correct node sb-delivers $(sn, m)$ with $m\neq\bot$ and $\sigma$ is correct then $\sigma$ sb-cast $(sn, m)$.
\begin{proof}
  A correct node sb-delivers $(sn, m)$ only if it has decided on $(sn, m)$ as consensus value (line \ref{ln:sb-deliver}).
  By consensus Validity (\bcProperty4), any value other than $\bot$ is proposed by some correct process.
  Correct nodes invoke consensus for a value $m$ with $m\neq\bot$ only if this value is brb-delivered (line \ref{ln:rb-deliver}).
  By Byzantine reliable broadcast Integrity property (\rbProperty2), since we assume $\sigma$ correct, $sigma$
  brb-casts value $m$.
  This event is only triggered if $\sigma$ sb-casts $(sn, m)$ (line \ref{ln:sb-cast}) which proves \moduleProperty1 Integrity.
\end{proof}

\noindent \textbf{\moduleProperty2 Agreement:}
If two correct nodes sb-deliver, respectively, $(sn, m)$ and $(sn, m')$, then $m = m'$.
\begin{proof}
  If a correct node sb-delivers a value $v$, then $v$ is decided by consensus (line \ref{ln:sb-deliver}).
  By consensus Integrity (\bcProperty3), all correct processes decide  on one value per consensus invocation.
  By consensus Agreement (\bcProperty2), all correct nodes decide on the same value and, therefore, sb-deliver the same value.
\end{proof}

\noindent\textbf{\moduleProperty3 Termination:}
If $p$ is correct, then $p$ eventually sb-delivers a message for every sequence number in $S$,
i.e., $\forall sn \in S : \exists m \in M \cup \{\bot\}$ such that $p$ sb-delivers $(sn, m)$.
\begin{proof}
  Let us assume that there exist some sequence number $sn\in S$ for which a correct node $p$ never sb-delivers any message.
  We distinguish two cases.
  Either some correct node $q$ brb-delivers a message $m$ with $sn$,
  or no correct process ever brb-delivers any message with $sn$.

  In the first case, by BRB Totality (\rbProperty 6), every correct node brb-delivers a message and by BRB Consistency (\rbProperty 5) this message is $m$,
  Therefore, every correct process invokes consensus with $m$  (line \ref{ln:rb-deliver}).
  By consensus Validity (\bcProperty4), every correct process decides on $m$ and, therefore, sb-delivers $(sn,m)$ (line \ref{ln:sb-deliver}).
  A contradiction to correct noded $p$ not sb-delivering any message with $sn$.

  In the second case, by Strong Completeness of the failure detector, $\sigma$ will be eventually suspected to be quiet.
  Therefore, all correct process will abort; they will invoke consensus with $\bot$ for all sequence numbers in $S$, including $sn$ (lines \ref{ln:sb-StartForAllInS}-\ref{ln:sb-EndForAllInS}).
  By consensus Validity (\bcProperty4), every correct process decides on $\bot$ and, therefore, sb-delivers $(sn,\bot)$ (line \ref{ln:sb-deliver}).
  Again, a contradiction to correct noded $p$ not sb-delivering any message with $sn$.
\end{proof}

\noindent\textbf{\moduleProperty4 Eventual Progress:}
If some correct node sb-delivers $(sn,\bot)$ then some correct node suspects $\sigma$ after  $\langle \sbinit \rangle$ is invoked.
\begin{proof}

Let us assume some correct node $p$ sb-delivers $(sn,\bot)$.
By \Cref{alg:sb-impl} either $p$ must have decided on $\bot$ as consensus value (line \ref{ln:sb-deliver}).

By consensus Validity  (\bcProperty4), there exist two possiblies.
Either (1) all correct nodes proposed the same value $\bot$,
or (2) not all correct nodes propose the same value.

In the first case, some correct node $q$ proposed $\bot$.
Therefore, by \Cref{alg:sb-impl}, $q$ suspected $\sigma$ after initialization (lines \ref{ln:sb-ealy-suspect},  \ref{ln:sb-suspect}).

In the second case, by \Cref{alg:sb-impl}, not all correct nodes brb-delivered the same message.
Thus, by reliable broadcast Totality (\rbProperty5), no correct node delivered any message.
Therefore, by reliable broadcast Validity, $\sigma$ is quiet; otherwise $\sigma$ would have brb-cast some message which alld correct nodes would have delivered.
By Strong Completeness of $\diamond S(bz)$, $\sigma$ is eventually permanently suspected by all correct nodes.
\end{proof}

\subsection{From \moduleAbbr to \smr}
\label{sec:smr-proof}
We prove that multiplexing \moduleAbbr instances with \sysname implements an SMR service to a set of clients, as defined in \Cref{sec:model}.
Line numbers refer to \Cref{alg:top-level,alg:auxiliary,alg:epoch-init,alg:leader-selection-policies} in \Cref{sec:details}.

\noindent \textbf{\smrProperty1 Integrity: }
If a correct node delivers $(sn, r)$,
where $r.id.c$ is a correct client's identidy,
then client $c$ broadcast $r$.

\begin{proof}
A correct node only delivers request $r$ if it is inserted in the log as part of a committed batch $b$ (line \ref{ln:in-order-delivery}).
In turn, $b$ is added in the log only upon an event $\langle sb, \sbdeliver | sn_b,b \rangle$, where $sb$ is a $\moduleAbbr(\sigma,M,S,D)$ instance (line \ref{ln:module-deliver}) and
$M$ the set of all possible valid batches in the buckets of the segment.
A correct node only invokes $\langle sb, \sbdeliver | sn_b,b \rangle$ with a batch $b$ in the set of valid batches $M$.
A validity condition is that every request of the batch has a valid signature (see~\Cref{sec:requests}).
Since $r.id.c$ is the only process able to produce a valid signature, $c$ must have broadcast $r$.
\end{proof}


\noindent \textbf{\smrProperty2 Agreement:}
If two correct nodes deliver, respectively, $(sn,r)$ and $(sn,r')$, then $r=r'$.

\begin{proof}
Let $r$ be in batch $b$ committed with $sn_{b}$ and $r'$ in a batch $b'$ committed with $sn_{b'}$.
For $r$, $r'$ to have the same sequence number, it follows from \Cref{eq:requests} and by the same log established by \moduleProperty2 (Agreement) that $sn_b=sn_b'$.
Since $b$ and $b'$ are delivered with the same sequence number, they belong to the same segment $S$ and, thus, also in the same set $M$ for which an instance $\moduleAbbr(\sigma,M,S,D)$ was initialized.
Thus, by \moduleProperty2 $b=b'$,
and by \Cref{eq:requests} $r=r'$.
\end{proof}



\begin{lemma}\label{lemma:epoch-liveness}
If a correct node initializes $\moduleAbbr(\sigma,M,S,D)$ then every correct node eventually initializes $\moduleAbbr(\sigma,M,S,D)$.
\end{lemma}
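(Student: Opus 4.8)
The plan is to prove Lemma~\ref{lemma:epoch-liveness} by induction on the epoch number $e$, using the already-established $\moduleAbbr$ \emph{Termination} property (\moduleProperty3) together with \smrProperty2 (Agreement) as the driving engine. The key observation is that a correct node initializes the $\moduleAbbr$ instances for epoch $e$ exactly when it calls \textit{initEpoch}($e$), and this call is triggered either at \textit{init}() (for $e=0$, line~\ref{ln:init-init-0}) or upon the condition $\forall sn \in Sn(e-1): \text{log}[sn] \neq none$ (line~\ref{ln:epoch-advance-condition}). So the task reduces to showing that every correct node eventually fills all log positions of epoch $e-1$, and that when it does, it deterministically computes the \emph{same} parameters $(\sigma, M, S, D)$ for each segment of epoch $e$ as every other correct node.

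\textbf{Base case.} For $e=0$, \textit{initEpoch}(0) is invoked unconditionally by every correct node in its \textit{init}() handler (line~\ref{ln:init-init-0}), and the segment parameters depend only on the epoch number (via the leader selection policy, the round-robin sequence-number assignment, and the bucket assignment of \Cref{sec:req-dupl}), all of which are deterministic functions of $e$ and public system parameters. Hence every correct node initializes the same $\moduleAbbr(\sigma, M, S, D)$ instances for epoch $0$.

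\textbf{Inductive step.} Assume every correct node eventually initializes all $\moduleAbbr$ instances of epoch $e-1$. By \moduleProperty3 (Termination), each such instance eventually sb-delivers a value for every sequence number in its segment at every correct node; summing over the segments, every correct node eventually has $\text{log}[sn] \neq none$ for all $sn \in Sn(e-1)$, so the guard on line~\ref{ln:epoch-advance-condition} fires and the node calls \textit{initEpoch}($e$). It remains to check that the parameters agree across correct nodes. Here I would invoke \smrProperty2 (Agreement)—or more directly the per-sequence-number $\moduleAbbr$ Agreement (\moduleProperty2)—to conclude that all correct nodes hold identical log contents on $Sn(e-1)$; since the leader selection policies of \Cref{sec:leader-selection} are evaluated locally as deterministic functions of $e$ and the log up to $\maxSnE{e-1}$, all correct nodes compute the same leaderset, the same segment-to-sequence-number assignment (line~\ref{ln:init-assign-sns}), the same bucket assignment (\textit{Buckets}($e$, $i$)), and hence instantiate syntactically identical $\moduleAbbr(\sigma, M, S, D)$ tuples. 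This closes the induction.

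\textbf{The main obstacle} is the agreement-on-parameters part rather than the termination part: one must be careful that the leader selection policy genuinely depends only on information that \smrProperty2 guarantees is identical at all correct nodes. For the BLACKLIST (and SIMPLE) policies this is immediate, since \textit{lastFailure}($n$, $e$) is a function purely of which log entries equal $\bot$ up to $\max(Sn(e-1))$, and \moduleProperty2 ensures all correct nodes see the same $\bot$-pattern. For BACKOFF, which carries state across epochs (the \textit{penalty} array), one additionally needs that this state evolves deterministically from the same initial value and the same log history—again a consequence of Agreement applied inductively. I would state the lemma's proof for a generic policy satisfying the "locally evaluable, deterministic function of $e$ and the log up to $\maxSnE{e}$" requirement spelled out in \Cref{sec:leader-selection}, so that it covers all three example policies uniformly, and note explicitly that the $M$ parameter, $batches(\activeBuckets{e}{i})$, likewise depends only on $e$ and the (agreed) set of previously committed requests, so no correct node's instance differs in its allowed message set either.
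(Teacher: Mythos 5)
Your proof is correct and follows essentially the same route as the paper's: induction on the epoch number, using \moduleAbbr\ Termination to guarantee that every correct node eventually fills all log positions of the previous epoch and advances, and Agreement to guarantee that all correct nodes then hold the same log and hence deterministically compute identical instance parameters. Your additional care about the determinism of the leader selection policies (including the BACKOFF penalty state) is a welcome elaboration of a point the paper's proof leaves implicit, but it does not change the argument's structure.
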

\begin{proof}
By the \sysname algorithm, the \moduleAbbr instances (including their parameters $\sigma, M, S,D$) are initialized at the beginning of each epoch $e$ depending only on the state of the node's log at the beginning of $e$.
We prove the lemma by induction on the epoch $e$.
In epoch $0$ all nodes have an empty log.
Thus, they instantiate $SB$ with the same parameters.

We assume that in epoch $e$ correct nodes $i$ and $j$ initialize \moduleAbbr with the same parameters.
Then, by \moduleAbbr~3 (Termination) and by \smrProperty~2 (Agreement) both $i$ and $j$ advance from epoch $e$ to epoch $e+1$ with the same log.
Thus, both $i$ and $j$ initialize \moduleAbbr with the same parameters, i.e., sender $\sigma$, message set $M$ and segments $S$.
\end{proof}

\noindent \textbf{\smrProperty3 Totality:}
If a correct node delivers request $(sn,r)$, then every correct node eventually delivers $(sn,r)$.
\begin{proof}
Let us assume that some correct node $i$ delivers $r$ with sequence number $sn$.
Let $j$ be some other correct node.
From~\Cref{eq:requests} request $r$, delivered by $i$, uniquely corresponds to some batch $b$ with sequence number $sn_b$ in the log of $i$ which $i$ has committed.
Therefore, there exists an instance $\moduleAbbr(\sigma,M,S,D)$ which outputs $\sbdeliver(sn_b,b)$ at node $i$.
By~\Cref{lemma:epoch-liveness}, node $j$ also eventually initializes $\moduleAbbr(\sigma,M,S,D)$.
Then \moduleAbbr~3 (Termination) guarantees that for each sequence number in $S$ and, therefore, for also $sn_b$, $j$ delivers a message $m\in M\cup\bot$.
Moreover, property \moduleAbbr~2 (Agreement) guarantees that $m=b$.
It follows that $j$ delivers $r$ for $sn$.
\end{proof}

\begin{lemma}\label{lemma:unsuspected-correct-leader}
In an infinite \sysname execution, there exists some correct node which is eventually in the leaderset forever without being suspected.
\end{lemma}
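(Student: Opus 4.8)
The plan is to combine Eventual Weak Accuracy of the $\Diamond S(bz)$ failure detector with the behaviour of the (default) BLACKLIST leader selection policy. By Eventual Weak Accuracy, fix a correct node $c$ and a time $t$ after which no correct node ever suspects $c$. The first step is a helper fact: $c$ \emph{fails} --- i.e.\ leads a segment in which some correct node sb-delivers $(sn,\bot)$ --- only finitely often. Since \moduleProperty3 (Termination) lets every correct node advance through epochs forever, epoch numbers grow without bound at every correct node over unbounded real time; as there are finitely many correct nodes, there is an epoch index $e_0$ such that every correct node initializes all of its epoch-$e$ \moduleAbbr instances only after $t$, for every $e \ge e_0$. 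Suppose $c$ leads the segment $\moduleAbbr(c, M, S, D)$ in some epoch $e \ge e_0$ and some correct node sb-delivers $(sn,\bot)$ with $sn \in S$: then by \moduleProperty4 (Eventual Progress) some correct node suspected $c$ after initializing that instance, hence after $t$, contradicting the choice of $t$. Hence all of $c$'s failures occur at the finitely many sequence numbers of epochs $< e_0$, so $lastFailure(c,e)$ equals a fixed finite value $N$ for all sufficiently large $e$.

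The second step shows that this makes $c$ eventually belong to $\epLeaders{e}$ for all large $e$. For the SIMPLE policy this is immediate, since $\epLeaders{e}=\mathcal{N}$ always. For BLACKLIST, a node is excluded from epoch $e$ only when $lastFailure(\cdot,e)$ is among the $f$ largest values, and at most $f$ nodes are excluded, so every epoch has at least $2f+1 \ge f+1$ correct leaders. It therefore suffices to show that the fixed value $lastFailure(c,\cdot)=N$ is eventually not among the $f$ largest. The driver is Strong Completeness: every quiet node is eventually permanently suspected, so whenever a quiet faulty node leads a segment it eventually fails there (the segment's \moduleAbbr instance commits $\bot$), at a sequence number strictly more recent than $N$. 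Since BLACKLIST excludes only $f$ nodes, and a node with no failure yet on record has $lastFailure=-1$ and so cannot be excluded once at least $f$ other nodes have failed, each of the $\le f$ quiet faulty nodes is forced to acquire --- and keep re-acquiring --- a failure more recent than $N$; these nodes then permanently displace $c$ from the set of the $f$ most-recently-failed, so $c\in\epLeaders{e}$ for all large $e$. Combined with the first step, $c$ is the required correct node. (BACKOFF is analogous: $c$'s ban period decreases under correct behaviour and eventually stays $0$, while the suspected faulty nodes keep being re-banned.)

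The main obstacle is precisely the last part of the BLACKLIST argument: closing the loop that the $\le f$ eventually-permanently-suspected nodes really do each acquire, and forever keep, a failure more recent than $c$'s, rather than all sitting in the blacklist with stale failures while $c$'s own stale failure $N$ keeps it in the top $f$. This requires a careful analysis of the blacklist dynamics, because being blacklisted prevents a node from leading and hence from ``refreshing'' its failure record: one argues that whenever fewer than $f$ nodes have a failure fresher than $N$, some quiet faulty node must eventually cycle out of the blacklist, lead, and fail afresh, so the set of nodes with a failure fresher than $N$ grows monotonically until it reaches size $\ge f$ (all of them faulty), after which $c$ is permanently a leader. Degenerate executions --- e.g.\ more than $f$ nodes being suspected infinitely often (which forces some of them to have unbounded $lastFailure$ and only helps), or BACKOFF epochs with no leaders at all --- have to be checked separately but only make it easier to exhibit a stable, unsuspected correct leader.
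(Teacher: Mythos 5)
Your first step (using \moduleProperty4 together with Eventual Weak Accuracy to conclude that the fixed node $c$ stops accumulating $\bot$-failures, so $lastFailure(c,\cdot)$ freezes at some value $N$) is sound and in fact more careful than the paper, which does not explicitly bridge ``suspected by the failure detector'' and ``has a $\bot$ recorded in the log.'' The gap is in the second step, for BLACKLIST. You try to prove the strictly stronger claim that $c$ \emph{itself} eventually rejoins $\epLeaders{e}$ forever, by arguing that the quiet faulty nodes keep acquiring failures fresher than $N$ until at least $f$ nodes displace $c$ from the blacklist. That mechanism is not guaranteed to run: if fewer than $f$ nodes ever manifest a failure after $N$ --- e.g.\ there are no quiet faulty nodes at all, or only $k<f$ of them, and $c$'s record $N$ stems from a one-off transient partition --- then the set of nodes with a failure fresher than $N$ never reaches size $f$, $c$'s stale record keeps it among the $f$ most-recently-failed, and $c$ is excluded forever. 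This is not a degenerate corner case; the paper explicitly flags it (``correct nodes might be excluded from the leaderset forever only due to transient network asynchrony / partitions''), and your closing list of ``degenerate executions'' names the opposite, harmless scenarios but not this one. Your claim that the set of fresher-than-$N$ failures ``grows monotonically until it reaches size $\ge f$'' is simply false in such executions.

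The fix is the paper's dichotomy, which abandons the attempt to rehabilitate $c$ specifically: either $c$ is eventually in the leaderset forever, or its permanent exclusion means that in arbitrarily late epochs at most $f-1$ other nodes have $lastFailure > N$; since $lastFailure$ is non-decreasing, at least $n-f \ge 2f+1$ nodes then have permanently stale (hence never again refreshed) failure records, at least $f+1$ of those are correct, and they sit below the blacklist cutoff --- so \emph{some} correct, eventually-unsuspected node is a permanent leader even though it need not be $c$. Your SIMPLE and BACKOFF cases are fine and match the paper.
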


\begin{proof}
We prove the lemma for all three suggested leader policies.
By the Eventual Weak Accuracy of the failure detector, there exists some correct node $p$ that will eventually stop being suspected by any correct node.
  \begin{itemize}
    \item \textbf{SIMPLE:} All nodes are in the leaderset in every epoch, therefore also $p$ is in the leaderset in every epoch.
    \item \textbf{BACKOFF:} The policy guarantees that any node which has been removed from the leaderset will be re-included.
    Since node $p$ will stop being suspected, eventually $p$ stays in the leaderset forever.
    \item \textbf{BLACKLIST:} Eventually, either $p$ will be in the leaderset forever, or there exists at least $2f+1$ other nodes that stop being suspected forever -- otherwise $p$ would eventually be included in the leaderset.
    Among those $2f+1$ nodes there are at least $f+1$ correct.
    Therefore, in any case, some correct node which stops being suspected remains in the leaderset forever.
  \end{itemize}
\end{proof}

\begin{lemma}\label{lemma:sb-validity}
Let $p$ be a correct node which is eventually in the leaderset forever without being suspected by any correct node after time $t$.
If $p$ sb-casts a message $m$ after time $t$, then $p$ eventually sb-delivers $m$.
\end{lemma}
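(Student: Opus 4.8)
The plan is to argue at the level of the \moduleAbbr abstraction itself, using only the \moduleAbbr properties (Integrity, Agreement, Termination, Eventual Progress), rather than unfolding the consensus-based implementation. Fix the correct node $p$ and the time $t$ as in the statement, and suppose $p$ sb-casts a message $m$ for some sequence number $sn$ in segment $S$ of some epoch $e$, at a time after $t$. Since $p$ sb-casts $m$, it is the designated sender $\sigma$ of the corresponding $\moduleAbbr(\sigma, batches(\cdot), S, D)$ instance, and by assumption $p = \sigma$ is correct and, after $t$, not suspected by any correct node.

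First I would invoke \moduleProperty3 (Termination): since $p$ is correct, $p$ eventually sb-delivers \emph{some} value $m' \in M \cup \{\bot\}$ for $sn$. It remains to rule out $m' \neq m$. The two cases are $m' = \bot$ and $m' \in M$ with $m' \neq m$. For the first case I would apply \moduleProperty4 (Eventual Progress): if some correct node sb-delivers $(sn, \bot)$, then some correct node suspected $\sigma = p$ after \sbinit\ was invoked at it. But \sbinit\ for this instance is triggered at epoch initialization (line~\ref{ln:init-module}), which happens after $t$ if we are careful — or, more robustly, we should argue that the instance is initialized after $t$, so any suspicion of $p$ relevant to Eventual Progress occurs after $t$, contradicting that no correct node suspects $p$ after $t$. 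This is the step I expect to need the most care: the subtlety is that Eventual Progress only talks about suspicion \emph{after initialization}, so I must make sure the epoch $e$ in which $p$ sb-casts $m$ begins after time $t$ — which it does, since $p$ sb-casts $m$ after $t$ and sb-casting happens during the epoch, after that epoch's \sbinit. Hence no correct node ever sb-delivers $(sn,\bot)$, so $m' \neq \bot$.

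For the second case, $m' \in M$ with $m' \neq m$: here I would apply \moduleProperty1 (Integrity). Since $p$ sb-delivers $(sn, m')$ with $m' \neq \bot$ and $\sigma = p$ is correct, Integrity forces $\sigma$ to have sb-cast $(sn, m')$. But a correct sender sb-casts at most one batch per sequence number in a segment (line~\ref{ln:propose-broadcast} is reached once per $sn$, tracked by \texttt{proposed[sn]}), and $p$ sb-cast $(sn, m)$; hence $m' = m$. Combining the two cases, $p$ sb-delivers $(sn, m)$, which is the claim. I would close by noting that the only genuinely delicate point is the timing argument linking "$p$ sb-casts after $t$" to "$p$ is not suspected after initialization of this instance", and that this follows because the instance's \sbinit\ necessarily precedes $p$'s \sbcast\ within the same epoch, so the initialization is itself after $t$.
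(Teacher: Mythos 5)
Your proof is correct and follows essentially the same route as the paper's: \moduleProperty3 (Termination) to get some delivered value, \moduleProperty4 (Eventual Progress) combined with the non-suspicion of $p$ to exclude $\bot$, and \moduleProperty1 (Integrity) plus the fact that a correct sender sb-casts at most one message per sequence number to pin the delivered value to $m$. If anything, you are more explicit than the paper about the timing subtlety that Eventual Progress only concerns suspicions occurring after \sbinit{}, which the paper's proof glosses over.
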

\begin{proof}
By \moduleAbbr~3 (Termination) $p$ sb-delivers a message for all sequence numbers for any \moduleAbbr instance.
Let us assume that $p$ sb-casts message $m$ after time $t$ in a \moduleAbbr instance but delivers a set of messages $\mathcal{M}$ for the sequence numbers of the instance, such that $m\notin \mathcal{M}$.
By $p$ not being suspected and \moduleAbbr~4 (Eventual Progress), for the \moduleAbbr instances for which it is the dedicated sender, $p$ delivers non $\bot$ values for all sequence numbers.
By \moduleAbbr~1 (Integrity) all messages in $M$ are sb-cast by $p$.
A contradiction to $m\notin \mathcal{M}$, since $p$ being correct, sb-casts only one message per sequence number.
Therefore, $p$ sb-delivers $m$.
\end{proof}

\begin{lemma}\label{lemma:infinite-prposals}
Let $p$ be a correct node that after time $t$ is in the leaderset forever and not suspected by any correct node.
$p$ is assigned every bucket infinitely many times.
\end{lemma}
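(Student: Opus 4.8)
The plan is to show that the \emph{initial}, leader-independent bucket assignment of \Cref{eq:buckets} by itself already hands every bucket to $p$ infinitely often, as long as $p$ stays in the leaderset; the redistribution of extra buckets to leaders plays no role in the argument.

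First I would argue that the execution passes through infinitely many epochs, and hence through infinitely many epochs that begin after time $t$. Epoch $0$ is initialized at every correct node (line \ref{ln:init-init-0}). If the \moduleAbbr instances of epoch $e$ are initialized at every correct node, then \moduleAbbr~3 (Termination) guarantees that every correct node eventually sb-delivers a message for every sequence number of every segment of $e$, so its log eventually contains an entry for each $sn\in Sn(e)$; this triggers the transition to epoch $e+1$ and the initialization of its \moduleAbbr instances (line \ref{ln:epoch-advance-condition}; this is essentially the inductive step of \Cref{lemma:epoch-liveness}). By induction every correct node reaches every epoch, so epoch numbers grow without bound and only finitely many epochs begin before $t$. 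Fix $e_0$ such that every epoch $e\ge e_0$ begins after $t$; since $p$ is a leader in every such epoch, none of them is skipped and $\numLeaders{e}\ge 1$.

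Next, fix an arbitrary bucket $b\in\mathcal{B}$ and identify $p$ with its index in $\{0,\dots,n-1\}$. By \Cref{eq:buckets}, $b\in\initBuckets{e}{p}$ exactly for the epochs $e$ with $b+e\equiv p \pmod n$, which form an arithmetic progression containing infinitely many $e\ge e_0$. For each such $e$, node $p$ is in the leaderset (as $e\ge e_0$), hence $p$ is the leader of a segment of $e$ whose bucket set is $\activeBuckets{e}{p}$; and by the definition of $\activeBuckets{e}{\cdot}$ we have $\initBuckets{e}{p}\subseteq\activeBuckets{e}{p}$, so $b\in\activeBuckets{e}{p}$. Thus $b$ is assigned to $p$ in infinitely many epochs, and since $b$ was arbitrary the lemma follows.

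The only nontrivial ingredient is the claim that the system cannot remain stuck inside a single epoch forever, which is exactly what \moduleAbbr~3 (Termination) together with the epoch-advance rule of \Cref{alg:top-level} gives us; the remainder is elementary modular arithmetic on the round-robin initial assignment, together with the trivial inclusion $\initBuckets{e}{p}\subseteq\activeBuckets{e}{p}$ whenever $p$ is a leader. (Note that the stronger hypothesis that $p$ is also never suspected is not needed for this lemma; it is used later when combining this lemma with \Cref{lemma:sb-validity}.)
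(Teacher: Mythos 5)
Your proof is correct and follows essentially the same route as the paper's: the round-robin initial assignment of \Cref{eq:buckets} already gives $p$ every bucket along an arithmetic progression of epochs, and $\initBuckets{e}{p}\subseteq\activeBuckets{e}{p}$ whenever $p$ is a leader. You are in fact more careful than the paper's one-line argument, since you explicitly justify that the execution traverses infinitely many epochs (via \moduleProperty3 and the epoch-advance rule) and correctly observe that the ``never suspected'' hypothesis is not needed here.
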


\begin{proof}
Since $p$ is the leaderset in every epoch after time $t$, there exists an epoch number $e$ such that, for $i=p$, satisfies \cref{eq:buckets} for every bucket $\hat{b}$ in the set of buckets $\mathcal{B}$.
\end{proof}

\begin{lemma}\label{lemma:oldest-delivered}
Let $p$ be a correct node that after time $t$ is in the leaderset forever and not suspected by any correct node.
Let $r$ be the oldest request that $p$ has received.
Let $\hat{b}$ be bucket $r$ maps to.
Some correct node eventually delivers $r$.
\end{lemma}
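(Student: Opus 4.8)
The plan is to prove the stronger statement that $p$ itself eventually delivers $r$; since $p$ is correct, this establishes the lemma (and if some \emph{other} correct node delivers $r$ first, we are done immediately by \smrProperty~3 (Totality)). The argument rests on three ingredients: (i) $p$ is the leader of a segment assigned bucket $\hat{b}$ in infinitely many epochs (Lemma~\ref{lemma:infinite-prposals}); (ii) any batch that $p$ sb-casts after time $t$ is eventually sb-delivered, i.e.\ committed at $p$ (Lemma~\ref{lemma:sb-validity}); and (iii) because no correct node suspects $p$ after $t$, the contrapositive of \moduleProperty~4 (Eventual Progress) shows that no \moduleAbbr instance with sender $p$ initialized after $t$ ever sb-delivers $\bot$ at a correct node, so by \moduleProperty~1 (Integrity) and \moduleProperty~2 (Agreement) such an instance commits exactly the batches $p$ proposed.

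First I would track the life cycle of $r$ in $p$'s local bucket queue for $\hat{b}$. The request is inserted on receipt (line~\ref{ln:request-to-bucket}), and it leaves the queue only when (a) $p$ commits a batch containing $r$ (line~\ref{ln:remove-delivered-req}), or (b) $p$ proposes a batch containing $r$ (line~\ref{ln:remove-proposed-req}), in which case $r$ is resurrected (line~\ref{ln:resurrect-add}) precisely when $\bot$ is later committed for that sequence number. By ingredient (iii), resurrection can be triggered only by an instance with sender $p$ initialized \emph{before} $t$. Hence at every time after $t$ exactly one of the following holds: a batch containing $r$ has already been committed at $p$, or $r$ currently sits in $p$'s bucket queue.

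Next I would use the fact that epochs keep advancing at $p$ (a consequence of \moduleProperty~3 (Termination), which guarantees every epoch completes at every correct node, via Lemma~\ref{lemma:epoch-liveness}) together with Lemma~\ref{lemma:infinite-prposals} to pick an epoch $e$ that \emph{starts after} $t$ in which $p$ is the leader of a segment $s$ with $\hat{b}\in\activeBuckets{e}{p}$ (such $e$ exists because only finitely many epochs have started by the fixed time $t$). When $p$ reaches the proposal step for the first sequence number of $s$, the \texttt{propose} function returns after \texttt{batchTimeout} at the latest and cuts a batch of the oldest requests in $s$'s buckets (line~\ref{ln:oldest}). In the branch where $r$ has not yet been committed at $p$, $r$ is still in $p$'s queue, and since $r$ is the oldest request $p$ has received (hence the FIFO-front of bucket $\hat{b}$), it is included in this batch; by Lemma~\ref{lemma:sb-validity} the batch is sb-delivered, so a batch containing $r$ is committed at $p$. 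In either branch a batch containing $r$ is committed at $p$; \moduleProperty~3 (Termination) then fills all lower log positions, so $p$ eventually executes the in-order delivery step (line~\ref{ln:in-order-delivery}) and triggers $\deliver$ for $r$ (line~\ref{ln:smr-delivery}).

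The main obstacle is the bookkeeping that forbids $r$ from ``leaking'' out of $p$'s pipeline without ever being committed: one must rule out the scenario where $p$ proposes $r$, removes it from the queue, the slot is eventually filled with $\bot$, and $r$ is resurrected, possibly ad infinitum. Ingredient (iii) is exactly what closes this gap, since every instance with sender $p$ initialized after the finite time $t$ is immune to a $\bot$ outcome. A minor side point is to confirm that the batch $p$ cuts is non-empty whenever $r$ is pending, which is immediate from the definition of \texttt{cutBatch} and the FIFO discipline of the bucket queues.
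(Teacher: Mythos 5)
Your proposal is correct and follows essentially the same route as the paper's proof: either $r$ was already delivered, or it persists (via resurrection) at the front of $p$'s bucket queue until, by Lemma~\ref{lemma:infinite-prposals}, an epoch after $t$ assigns $\hat{b}$ to $p$, whereupon the oldest-first \texttt{cutBatch} puts $r$ in a batch that Lemma~\ref{lemma:sb-validity} and \moduleProperty3 (Termination) guarantee is committed and delivered. The only cosmetic differences are that you argue directly rather than by contradiction and make the no-$\bot$ consequence of \moduleProperty4 explicit, which the paper leaves folded inside Lemma~\ref{lemma:sb-validity}.
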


\begin{proof}
  Let $e_t$ be the epoch number of the first epoch which starts at all nodes after time $t$.
  There exist two mutually exclusive cases.
  \begin{enumerate}
    \item Some correct node delivers $r$ in an epoch before epoch $e_t$.
    \item No correct node delivers $r$ in an epoch before epoch $e_t$.
  \end{enumerate}
  In the first case the lemma is trivially satisfied.
  We will prove the second case by contradiction.

  Let us assume that no correct node delivers $r$ in an epoch with number $>=e_t$.
  By the \sysname algorithm, $p$ only ever removes $r$ from its bucket queue if $p$ proposes (line \ref{ln:remove-proposed-req}) or delivers (line \ref{ln:remove-delivered-req}) $r$.
  Since, by our assumption, $p$ does not deliver $r$ before epoch $e_t$, $r$ remains in $p$'s corresponding bucket queue.
  Even if $p$ proposed $r$ before epoch $e_t$, since $r$ is not delivered, by the \sysname algorithm $p$ has resurrected $r$ (line \ref{ln:resurrect-call}), i.e. $r$ is re-added in the corresponding bucket-queue maintaining its reception order.
  By \Cref{lemma:infinite-prposals}, $p$ is eventually assigned bucket $\hat{b}$.
  Since $r$ is the oldest request, by the \sysname algorithm, $p$ will sb-cast a batch $b$ containing $r$ (lines \ref{ln:propose-cut-batch},\ref{ln:oldest}).
  By \Cref{lemma:sb-validity}, $p$ sb-delivers $b$.
  Let $sn$ be the sequence number with which $b$ is sb-delivered.
  By \moduleProperty3 (Termination), all correct nodes sb-deliver and add in their log all seqence numbers in the segment of $sn$.
  Therefore, by the \sysname algorithm (line \ref{ln:in-order-delivery}), they deliver all requests in $b$, including $r$.
  A contradiction to $r$ not being delivered.
\end{proof}

\noindent \textbf{\smrProperty4 Liveness:}
If a correct client broadcasts request $r$, then some correct node eventually delivers $(sn, r)$.
\begin{proof}
Let us assume by contradiction that $r$ is never delivered by any correct node.
This implies that every correct node puts $r$ in their respective bucket queue (by the correct client re-transmitting $r$ forever, see \Cref{sec:requests}).
Eventually, by \Cref{lemma:unsuspected-correct-leader} there will be at least one correct, unsuspected node $i$ in the leaderset forever.
Let $r$ map to a bucket $\hat{b}$.
Let $R(r)$ be the set of all requests received by $i$ before receiving $r$.
We prove Liveness by induction on the size of $R(r)$.
For $|R(r)|=0$, $r$ is the oldest request that $i$ has received and, by \cref{lemma:oldest-delivered}, $r$ is delivered by some correct node.
For the induction step we show that if $r$ is delivered for $|R(r)|<=d$, then $r$ is delivered for $|R(r)|=d+1$.
Let $|R(r)|=d+1$.
We denote by $r^k$ be the request $k^{th}$ oldest request, such that $r^{d}$ is the request received immediately before $r$ with $R(r^d)|=d$.
By the induction hypothesis, all $r^k$ for $k<=d$ will be delivered.
Therefore, $r$ becomes the oldest request and, by \cref{lemma:oldest-delivered}, $r$ is delivered by some correct node.
A contradiction to $r$ not being delivered.
\end{proof}

\section{Evaluation}
\label{sec:evaluation}

Our implementation is modular, allowing to easy switching between different protocols implementing \moduleAbbr.
We use 3  well-known protocols for ordering requests: PBFT~\cite{Castro:2002:PBF} (BFT), HotStuff~\cite{hotstuff} (BFT), and Raft~\cite{RAFT} (CFT).
We evaluate the impact \sysname has on these protocols by comparing its performance to their respective original single-leader versions.
In addition, we compare \sysname to Mir-BFT~\cite{stathakopoulou2019mir} which also has multiple leaders.
We do not compare, however, to other multi-leader protocols that do not prevent request duplication
(e.g., Hashgraph \cite{Hashgraph}, Red Belly \cite{RedBelly}, RCC \cite{rcc}, OMADA \cite{omada}, BFT-Mencius~\cite{BFT-Mencius}).
The codebase of these protocols is unavailable or unmaintained.
Moreover, Mir-BFT evaluation demonstrates that the performance of this family of protocols deteriorates as the number of
nodes increases in the presence of duplicate requests.
For the same reason, we also don't compare to trivially running multiple instances of the single leader protocols.
We first evaluate performance in absence of failures.
Then, we study how \sysname behaves when failures occur.

\subsection{Experimental Setup}
We perform our evaluation in a WAN which spans in 16 datacenters across Europe, America, Australia, and Asia on IBM cloud.
All processes run on  dedicated virtual machines with 32 x 2.0 GHz VCPUs and 32GB RAM running Ubuntu Linux 20.04.
All machines are equipped with two network interfaces, public and private, rate limited for repeatability to 1 Gbps; the public one is for request submission and the private one is for node-to-node communication.
Clients submit requests with 500 byte payload,
the average Bitcoin transaction size \cite{BitcoinTXSizeURL}.
Each node runs on a single virtual machine. Each node setup is uniformly distributed across all datacenters, except for the $4$-node setup which spans on $4$ detacenters, distributed across all $4$ continents.
We use 16 client machines, also uniformly distributed across all datacenters, each running 16 clients in parallel which communicate independently with the nodes using TLS.
We evaluate throughput, i.e., the number of requests the system delivers per second, and end-to-end latency, i.e., the latency from the moment a client submits a request until the client receives $f+1$ responses.

\subsection{\sysname Configuration}
\label{sec:configuration}
After a preliminary evaluation of each protocol, we concluded to a meaningful set of
parameters.
We do not claim that this set is optimal; an exhaustive set of experiments to find
the optimal configuration goes beyond the scope of this work.
Our choice of parameters allows us, however, to demonstrate that
\sysname makes all three protocols scalable, which is the key contribution of this work.
We here on discuss the most critical configuration parameters.
\Cref{table:config} summarizes the set of parameters of our evaluation.

For Raft and PBFT we maintain a fixed batch rate.
This translates to $O(1/n)$ proposals per leader and $O(n)$ message complexity per bottleneck
node\footnote{We use the term \emph{bottleneck} node to refer to the node that processes the most messages in
each protocol.}.
The choice of a fixed batch rate prevents throughput from dropping due to super-linear message complexity.
On the other hand, it introduces higher end-to-end latency as the number of nodes grows,
since the batch timeout increases.

The epoch length is kept short:
$256$ bathces per epoch for a batch rate of $32$ batches per second yield an epoch duration of
aproximately $8$ seconds in a fault free execution%
\footnote{The actual epoch duration is increased by the time a batch needs to commit and by the overhead of extra communication rounds at the end of the epoch.
Epoch duration becomes shorter if the load is high such that batches become full before the batch timeout.}.
Shorter epoch length maintains lower latencies in case a fault occurs because bucket re-distribution
is executed faster.
However, a fixed epoch length yields a shorter segment length as the number of nodes increases.
Too short segments for HotStuff and Raft translate to a significant
overhead of the dummy/empty bathces at the end of the segment.
We, therefore, chose a larger minimum segment size for those two protocols.

Batch timeouts should in general be kept small to prevent increasing end-to-end latency.
In Raft, however, by design, a leader node re-sends proposals
until it has received an acknowledgement from the followers.
A very short batch timeout would result in sending proposals too soon and therefore repeating previous proposals.
This has a negative impact on throughput because the bandwidth is consumed by unnecessary duplicate proposals.
To avoid this phenomenon, we opted for a minimum batch timeout longer than the approximated network round trip.
To prevent rate-limiting Raft throughput due to the long batch timeouts, we allowed a large batch size.

HotStuff, on the other hand, is a latency bound protocol.
This is because sending a proposal first requires assembling a quorum certificate which depends on the previous proposal.
We opted, therefore, for a batch timeout of $0$ to allow the leader to send proposals as fast as possible.
Similarly to Raft, we allowed a large batch size to prevent rate-limiting the throughput.

\begin{table}[ht]
\centering
\scriptsize{
\begin{tabular}{| l | l | l | l | }
\hline
& \textbf{PBFT} & \textbf{HotStuff} & \textbf{Raft} \\ \hline
Initial lederset size & $|\mathcal{N}|$ &  $|\mathcal{N}|$ & $|\mathcal{N}|$ \\ \hline
Max batch size & 2048 & 4096 & 4096 \\ \hline
Batch rate & 32 b/s & not applicable & 32 b/s\\ \hline
Min batch timeout & 0 s & 1 s & 0 s \\ \hline
Max batch timeout & 4 s & 0 & 4 s \\ \hline
Min epoch length & 256 & 256 & 256 \\ \hline
Min segment size & 2 & 16 & 16 \\ \hline
Epoch change timeout & 10 s & 10 s &  [10,20) s \\ \hline
Buckets per leader & 16 & 16 & 16 \\ \hline
Client signatures & 256-bit ECDSA & 256-bit ECDSA  & none\\ \hline
\end{tabular}
}
\caption{\sysname configuration parameters used in evaluation}
\label{table:config}
\end{table}

\subsection{Failure-Free Performance}

\Cref{fig:scalability} shows the overall throughput scalability of PBFT, HotStuff, and Raft, with and without \sysname,
as well as that of Mir-BFT\@.

We evaluate the scalability of \sysname with up to 128 nodes, uniformly distributed across all 16 datacenters.
Mir-BFT is evaluated on the same set of datacenters on machines with the same specifications.
For a meaningful, apples-to-apples, comparison, we disabled Mir-BFT optimizations (signature verification sharding and light total order broadcast).
Such optimizations could be implemented on top of \sysname yielding even better performance.
However, this goes beyond the scope of this work.
For all protocols we run experiments with increasing the client request submission rate until the throughput is saturated.

In \Cref{fig:scalability} we report the highest measured throughput before saturation.
We observe that \sysname dramatically improves the performance of the single leader protocols as the number of nodes grows
(37x, 56x and 55x improvement for PBFT, HotStuff, and Raft, respectively, on 128 nodes).
This improvement is due to overcoming the single leader bandwidth bottleneck.
Moreover, as the number of nodes grows, \sysname-PBFT outperforms Mir-BFT.
While in theory, in a fault-free execution we would expect the two protocols to perform the same, we attribute this improvement to the more careful concurrency handling in the \sysname implementation.

\sysname-PBFT maintains more than $58$ kreq/s on $128$ nodes.
Its performance, though, drops compared to smaller configurations.
We attribute this to the increasing number of messages each node processes,
which, with a fixed batch rate (\Cref{table:config}), grow linearly with the number of nodes.
We further observe that throughput increases for Raft and HotStuff \sysname implementations with the number of nodes, approaching that of \sysname-PBFT.
While PBFT’s watermarking mechanism allows the leader to propose batches in parallel,
HotStuff,
as explained in \Cref{sec:configuration},
is latency-bound.
However, running multiple independent protocol instances with \sysname helps improve the overall throughput.
Raft, on the other hand, suffers from the redundant re-proposals.
While this drawback is mostly hidden in fast LANs with negligible latency, it manifests strongly in a WAN.
With more nodes in the leaderset the batch timeout increases and re-proposals are reduced.

\begin{figure}[h!]
\centering
\includegraphics[width=0.9\columnwidth]{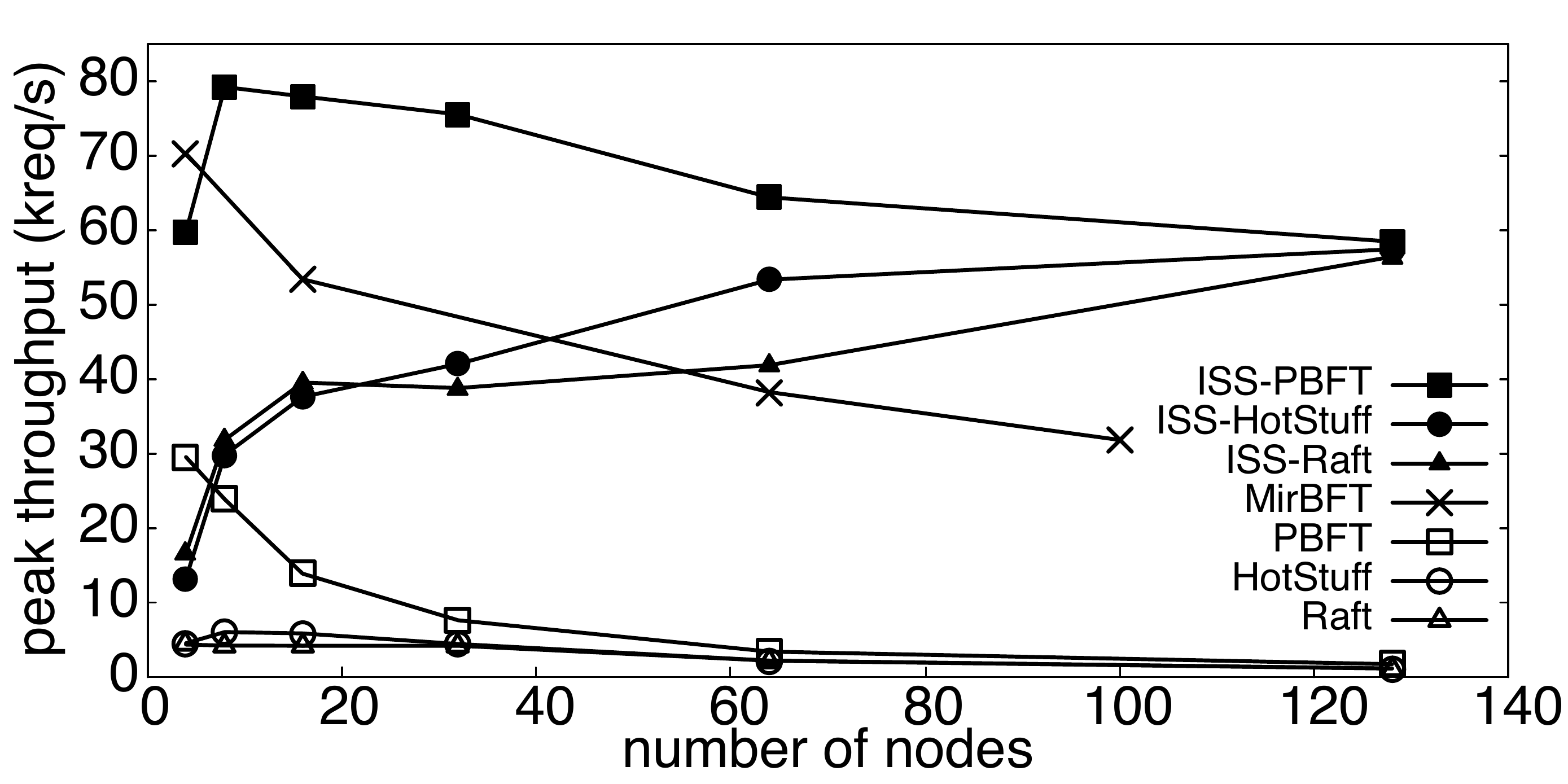}
\caption{Scalability of single leader protocols, their \sysname counterpart, and MirBFT.}
\label{fig:scalability}
\vspace*{-5mm}
\end{figure}

In \Cref{fig:lt} we observe that \sysname latency grows with the number of nodes.
This is due to our choice of a fixed batch rate in order to reduce message complexity and sustain high throughput
with an increasing number of nodes.

\begin{figure}[h!]
\centering
\begin{subfigure}{\columnwidth}
\centering
\includegraphics[width=0.9\columnwidth]{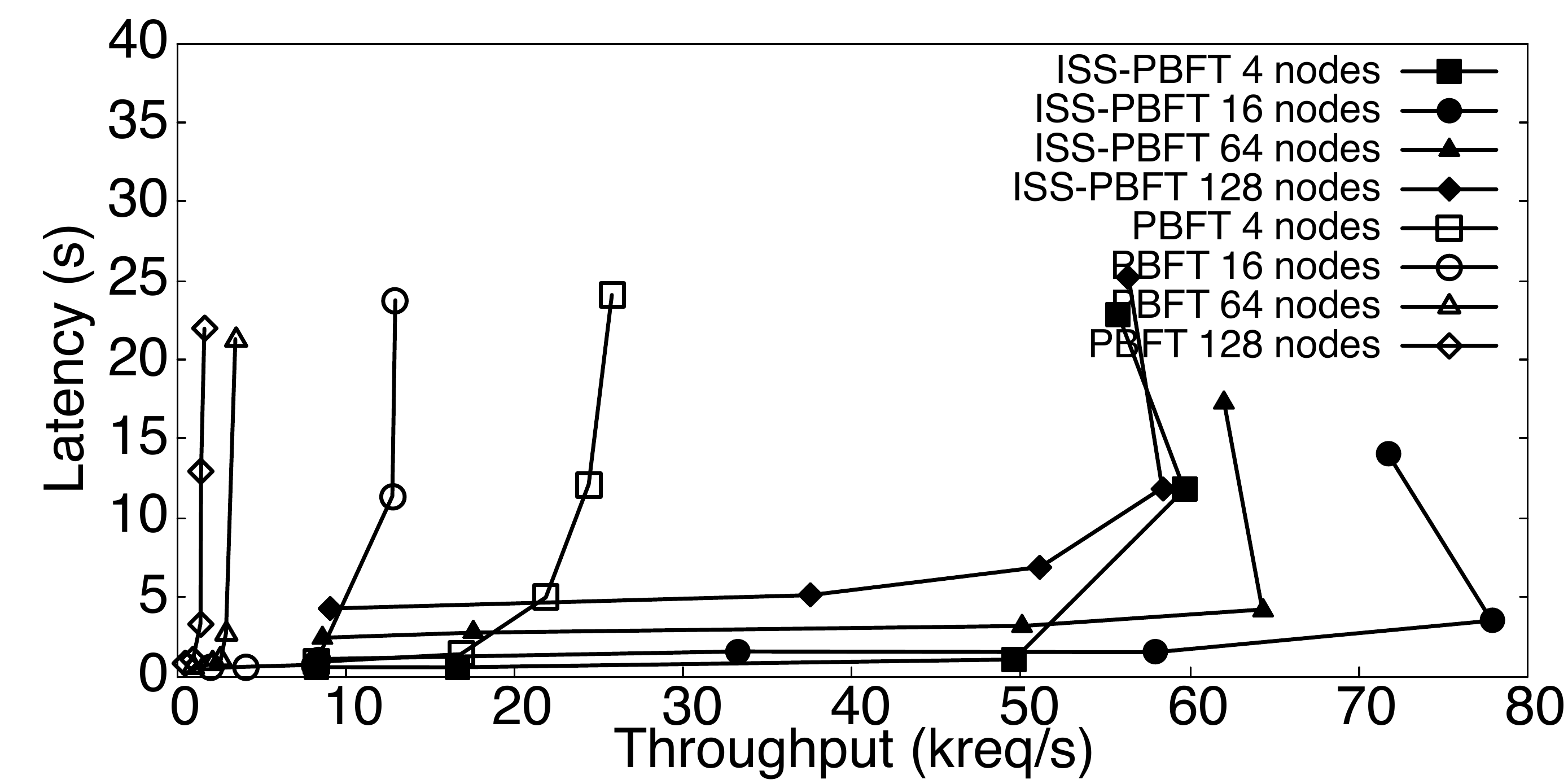}
\label{fig:lt-iss-pbft}
\end{subfigure}
\begin{subfigure}{\columnwidth}
\centering
\includegraphics[width=0.9\columnwidth]{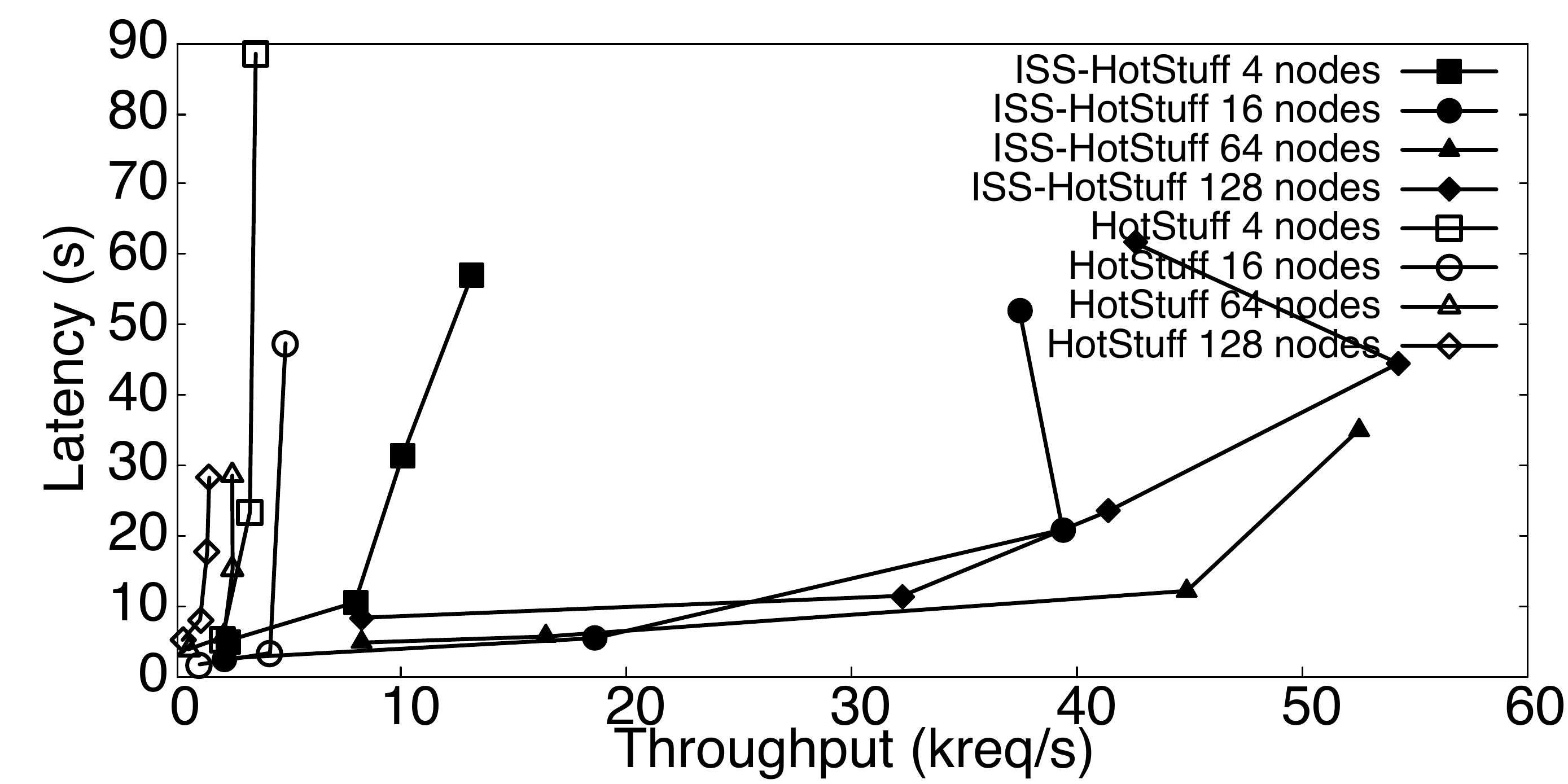}
\label{fig:lt-iss-hotstuff}
\end{subfigure}
\begin{subfigure}{\columnwidth}
\centering
\includegraphics[width=0.9\columnwidth]{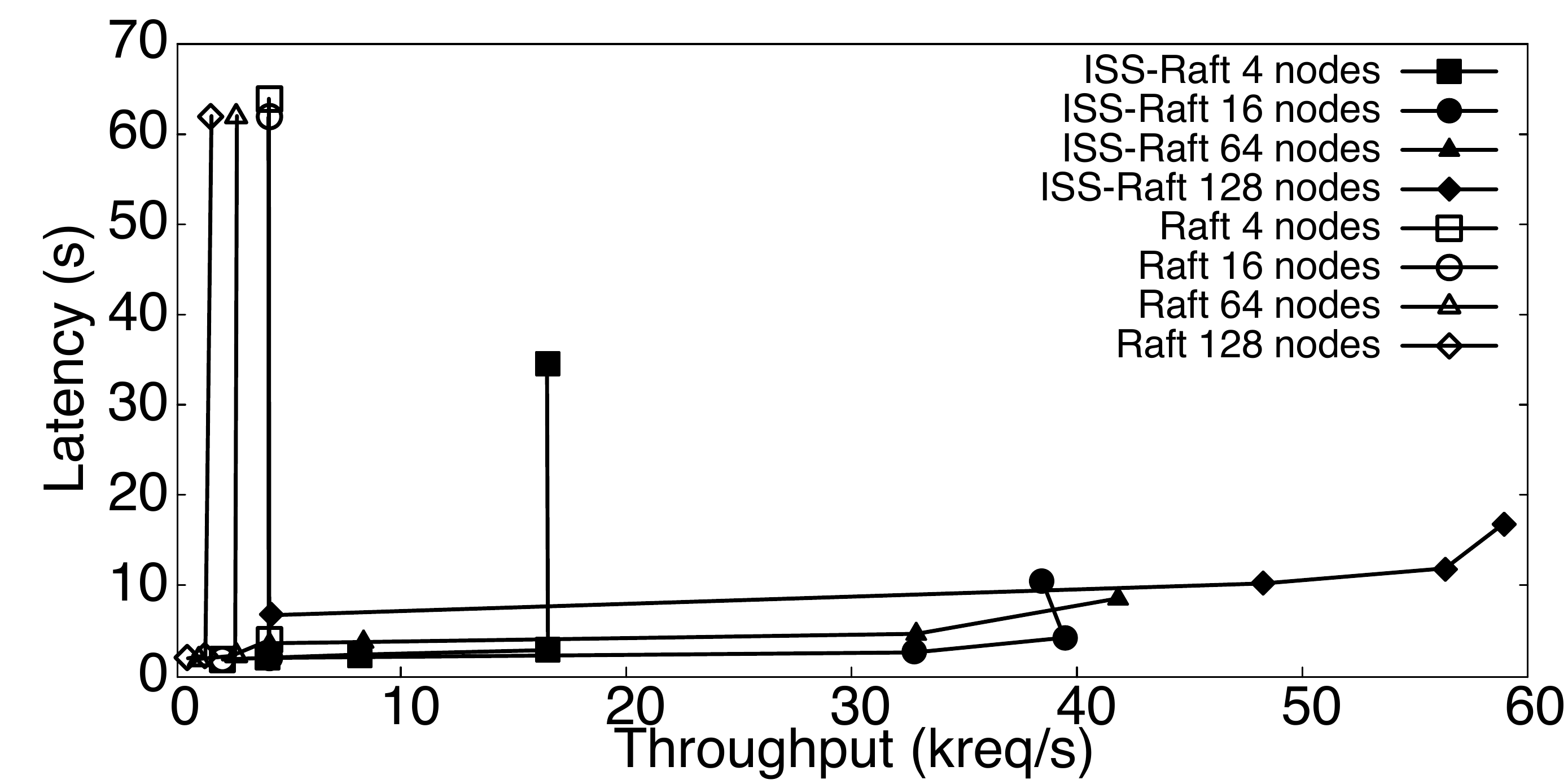}
\label{fig:lt-iss-raft}
\end{subfigure}
\caption{Latency over throughput for increasing load for (a) \sysname-PBFT, (b) \sysname-HotStuff, and (c) \sysname-Raft.}
\label{fig:lt}
\end{figure}

\subsection{\sysname Under Faults}
In this section we fix PBFT as the protocol multiplexed with \sysname and study its performance under crash faults
and Byzantine stragglers in a WAN of $32$ nodes.
The PBFT view change timeout is set at $10$ seconds.

\subsubsection{Crash Faults}
In this section we study how crash faults affect the \sysname latency and throughput
and the impact of the different leader selection policies.

We study two edge cases of faults: (a) one or more leaders crash at
the beginning of the first epoch of the execution and
(b) one or more leaders crash before sending the proposal for the last sequence
number they led in the first epoch of the execution.
The \emph{epoch-start} crash fault is a worst-case scenario for the number of
proposed sequence numbers in an epoch.
The \emph{epoch-end} crash fault is a worst-case scenario for the duration of the epoch;
all nodes need to wait for the fault to be detected at the end of the epoch.

In \Cref{fig:policies} we study the impact of epoch-start and epoch-end crash faults
on end-to-end latency with different leader selection policies for one crash fault
for a two-minute execution which yields 13 epochs.
All experiments are conducted with a total incoming request rate from the clients of
16.4k req/s.
Overall, we observe that \sysname-BFT sustains mean end-to-end latency below 8 seconds
and tail latency below 17 seconds with any leader selection policy.
\begin{figure}[h!]
\centering
\includegraphics[width=0.8\columnwidth]{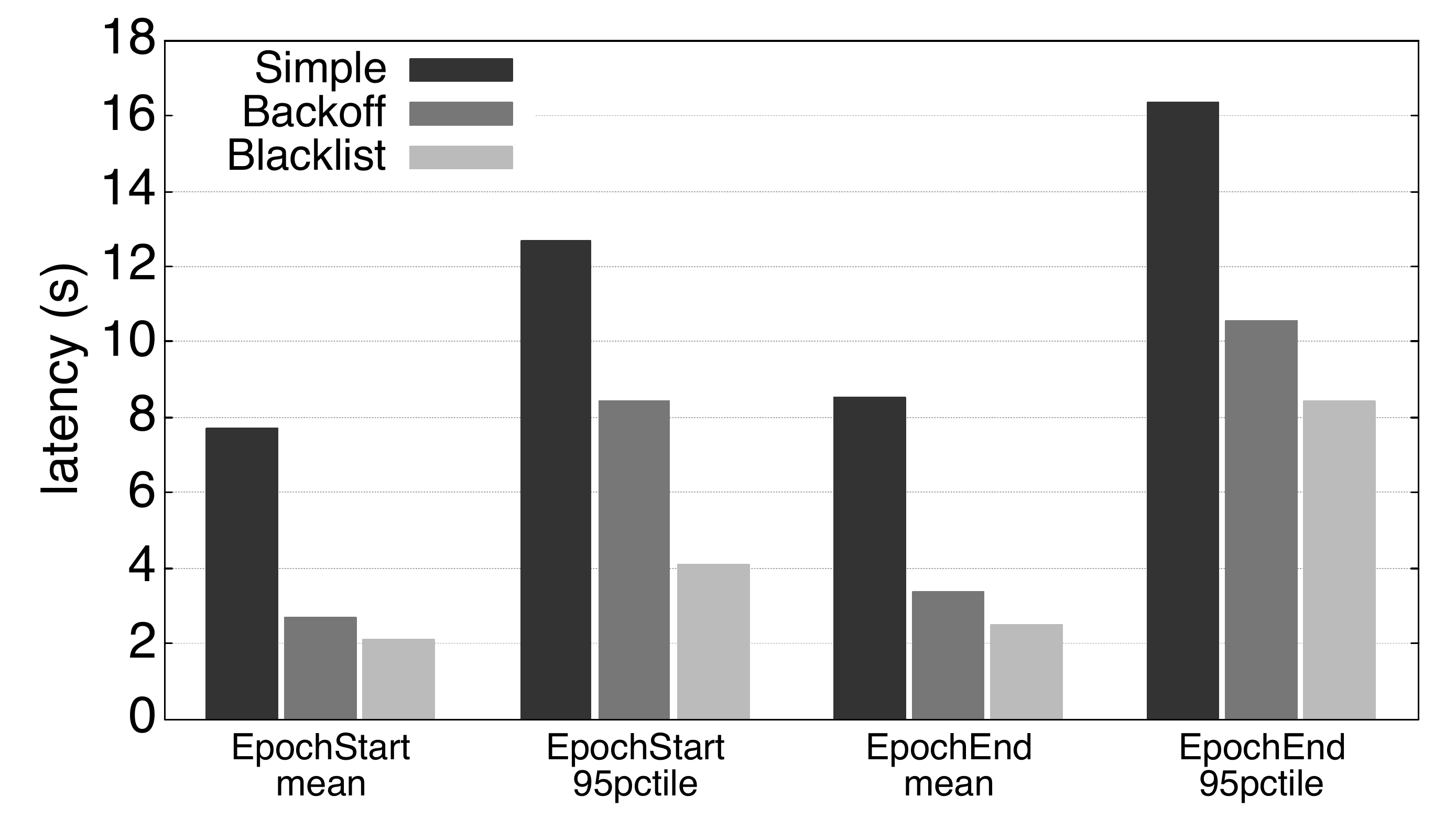}
\caption{Impact of different leader selection policies on end-to-end mean and tail (95 percentile).}
\label{fig:policies}
\end{figure}

Comparing the different leader selection policies,
we can see that Blacklist and Backoff policies maintain lower latency, since those
policies remove the crashed node from the leaderset.
In particular, Blacklist policy performs best by permanently removing the crashed node.
Therefore, latency is only affected during the first epoch.
Based on this conclusion, for the rest of the evaluation under faults we stick with the Blacklist leader selection policy.

\Cref{fig:crash-faults-duration}
shows the impact of crash faults on latency.
We see that latency converges towards that of a fault-free execution as we
increase the duration of the experiment.
This is due to Blacklist leader selection policy removing the faulty node from the leaderset once detected.
Note that, regardless of their number,
epoch-end failures have a stronger impact on latency (as they delay requests in all bucket queues)
than epoch-start failures (affecting only the faulty nodes' bucket queues).

\begin{figure}[h!]
\centering
\begin{subfigure}{\columnwidth}
\centering
\includegraphics[width=0.9\columnwidth]{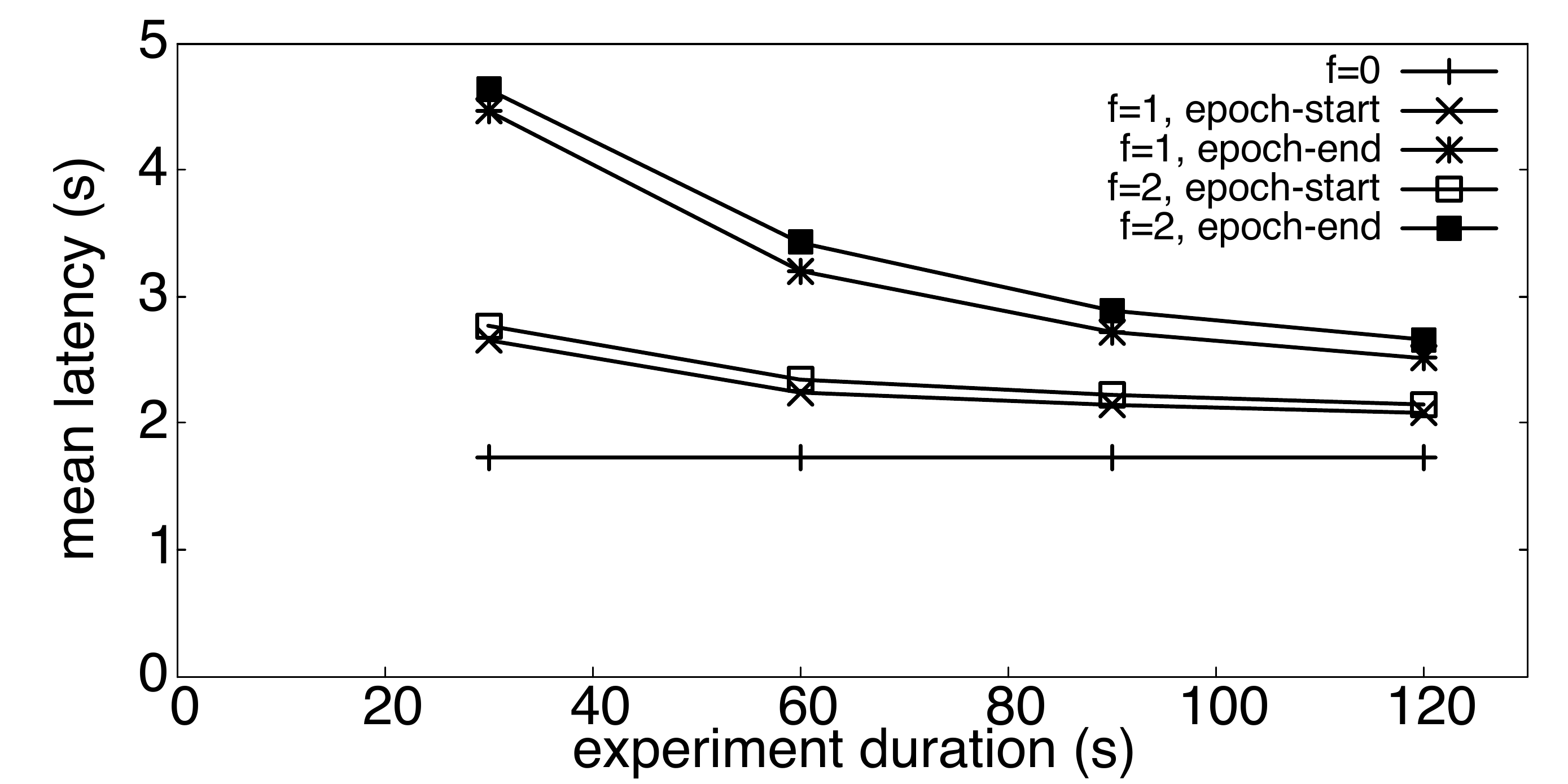}
\label{fig:crash-faults-duration-avg}
\end{subfigure}
\begin{subfigure}{\columnwidth}
\centering
\includegraphics[width=0.9\columnwidth]{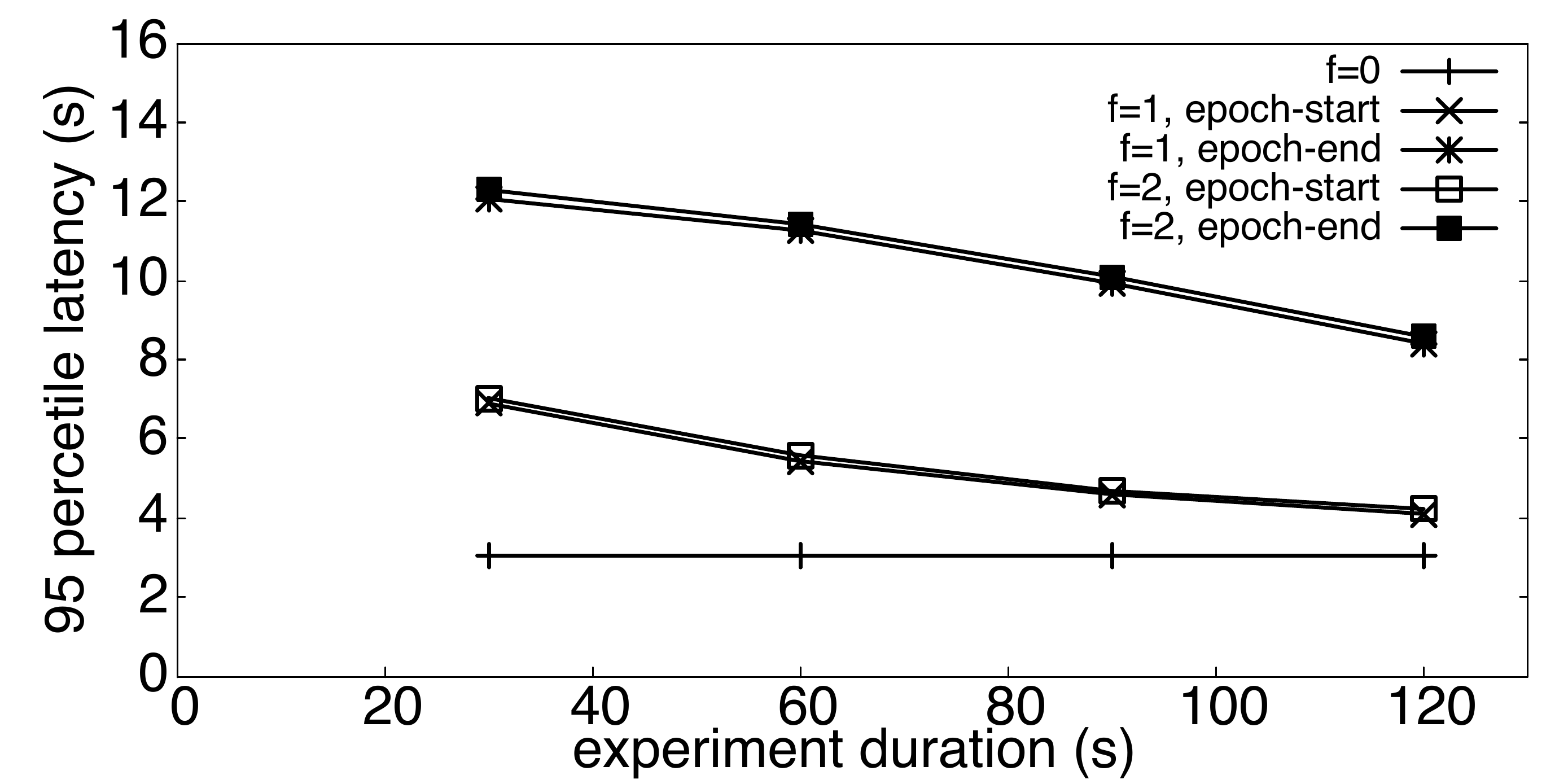}
\label{fig:crash-faults-duration-95p}
\end{subfigure}
\caption{Impact of crash faults on mean (a) and tail (b) end-to-end latency for increasing experiment duration with Blacklist policy.}
\label{fig:crash-faults-duration}
\end{figure}

\Cref{fig:crash-faults-timeline} shows throughput over time.
The short drops to $0$ in throughput correspond to the epoch change.
We see that an epoch-start fault does not delay the epoch change,
as the timer detecting the faulty leader of one segment runs in parallel
with other leaders agreeing on requests in their segments.
On the other hand, the epoch-end fault delays the epoch change.
However, \sysname quickly recovers by ordering more than $170$k req/s at the beginning of the second epoch (see the spike in \Cref{fig:crash-faults-timeline}(b)).

\begin{figure}[h!]
\centering
\begin{subfigure}{\columnwidth}
\centering
\includegraphics[width=0.8\columnwidth]{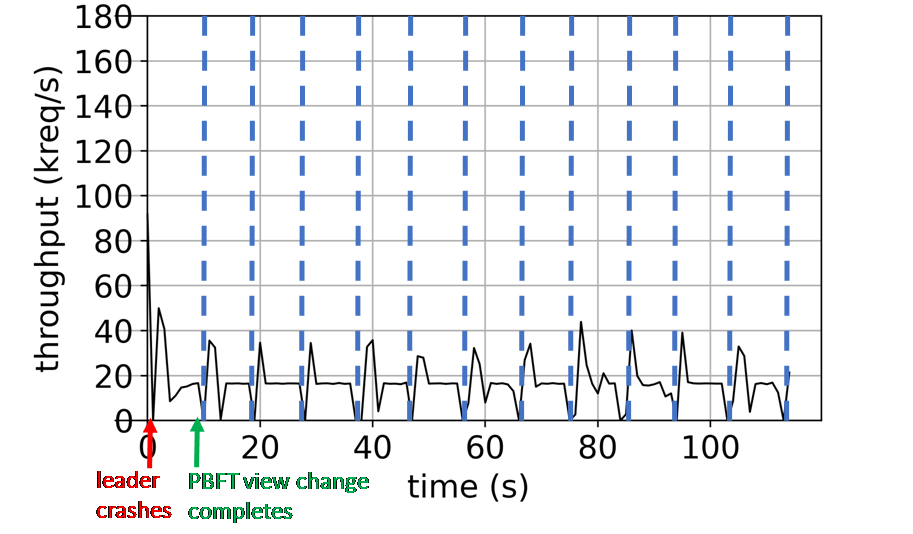}
\label{fig:epoch-start}
\end{subfigure}
\begin{subfigure}{\columnwidth}
\centering
\includegraphics[width=0.8\columnwidth]{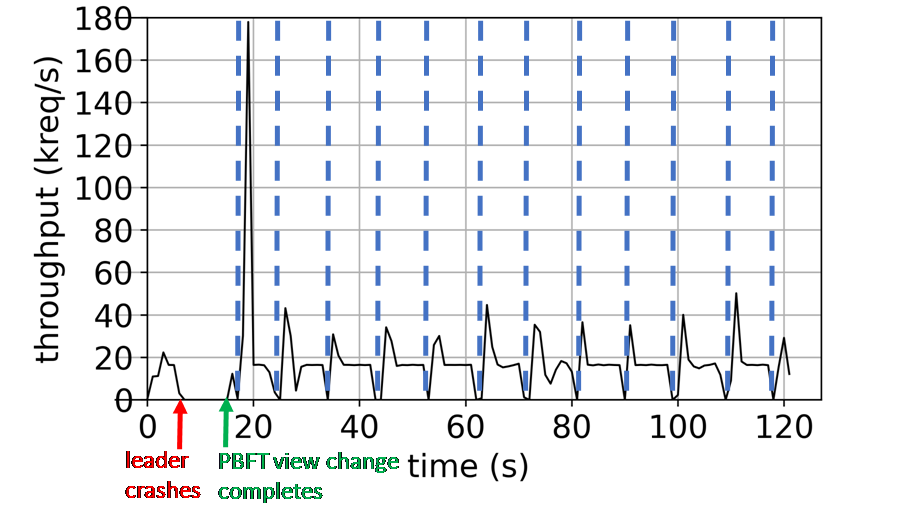}
\label{fig:epoch-end}
\end{subfigure}
\caption{\sysname-PBFT throughput average (over 1s intervals) over time with one crash fault at the beginning (a) and at the end (b) of  the first epoch with Blacklist policy. The dashed lines indicate the end of an epoch.}
\label{fig:crash-faults-timeline}
\end{figure}

We compare the \sysname performance under crash faults to MirBFT.
In \Cref{fig:mir-crash-faults-timeline-epoch-start} we study run MirBFT on $32$ nodes with a single epoch-start crash fault.
MirBFT stops processing any message during the epoch changes, unlike \sysname where segments make progress
independently.
This results in any crash fault having an impact similar to that of the epoch-end fault for \sysname.
Moreover, MirBFT relies on an epoch primary for liveness.
Every time the crashed node becomes epoch primary it causes an ungraceful epoch change.
In \Cref{fig:mir-crash-faults-timeline-epoch-start} this happens around $t=600$.
The phenomenon repeats periodically, unlike \sysname, where the faulty leader is permanently removed.
Finally, \sysname crash fault recovery is more lightweight, since it concerns only
the batches of a single segment.

\begin{figure}[h!]
\centering
\includegraphics[width=0.95\columnwidth]{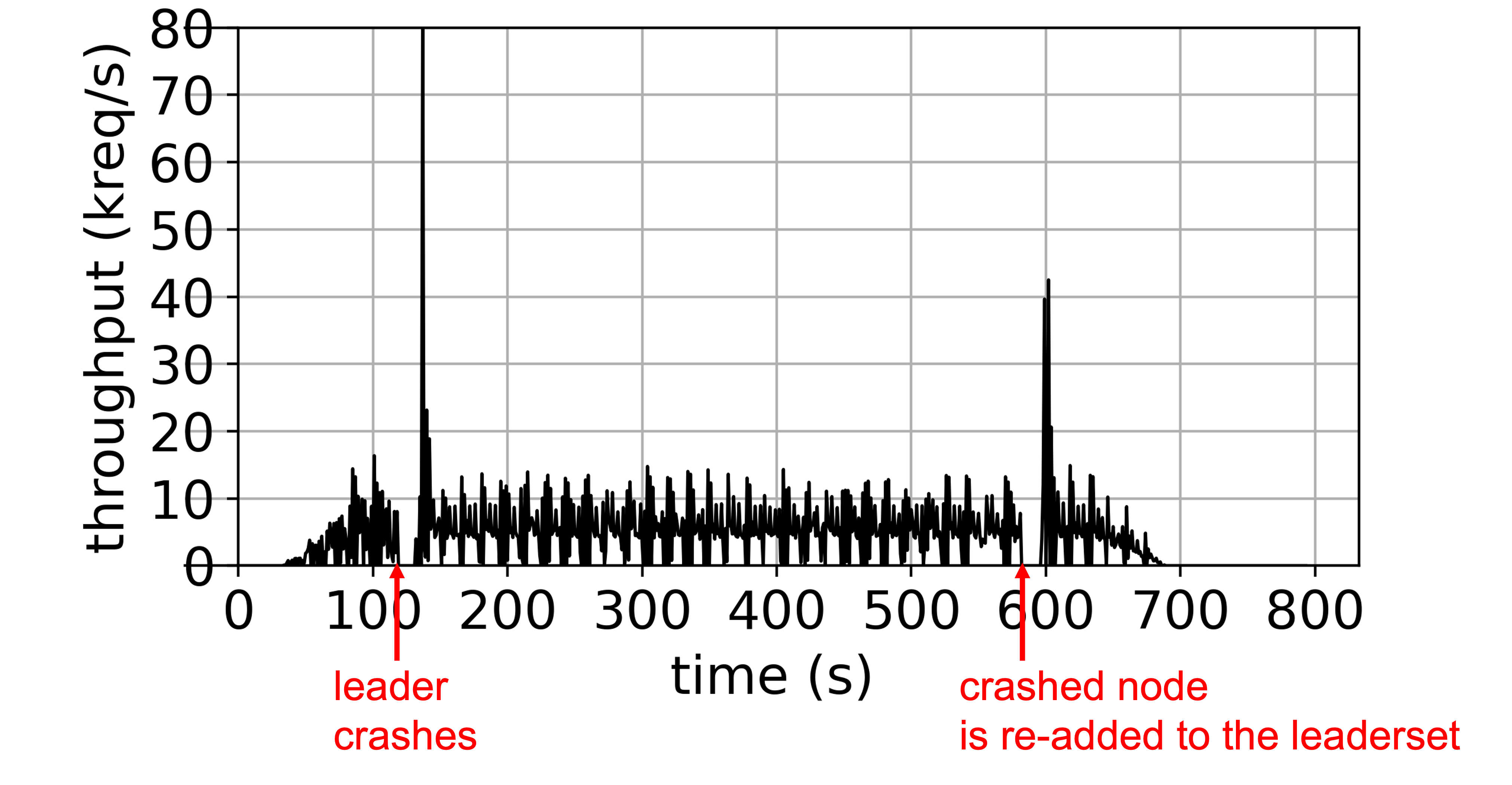}
\caption{MirBFT throughput average (over 1s intervals) over time with one epoch-start fault. Epoch change timeout is at 10s and epoch duration is 256 blocks. Periods of $0$ throughput repeat periodically.}
\label{fig:mir-crash-faults-timeline-epoch-start}
\end{figure}

\subsubsection{Byzantine Stragglers}
In this section we study the impact on latency and throughput of a Byzantine straggler.
A Byzantine straggler delays proposals as much as possible without being suspected as
faulty and does not add requests in its proposals to harm latency and throughput.
We evaluate latency and throughput with $f=1$ up to the maximum tolerated number of $f=10$ stragglers.
In our evaluation the straggler sends out an empty proposal every 0.5x epoch change timeout ($5$ seconds).

\Cref{fig:lt-stragglers} shows the impact of an increasing number of stragglers.
\sysname with PBFT reaches from 15\% of its maximum throughput with one straggler
to 10\% of its maximum throughput with 10 stragglers.
This, though, translates to maintaining more than $11.4$ and $7.9$ kreq/s, respectively, on $32$ nodes.
Mean latency before saturation increases from $14x$ with one up to $29x$ with 10 stragglers.

\begin{figure}[h!]
\centering
\includegraphics[width=0.9\columnwidth]{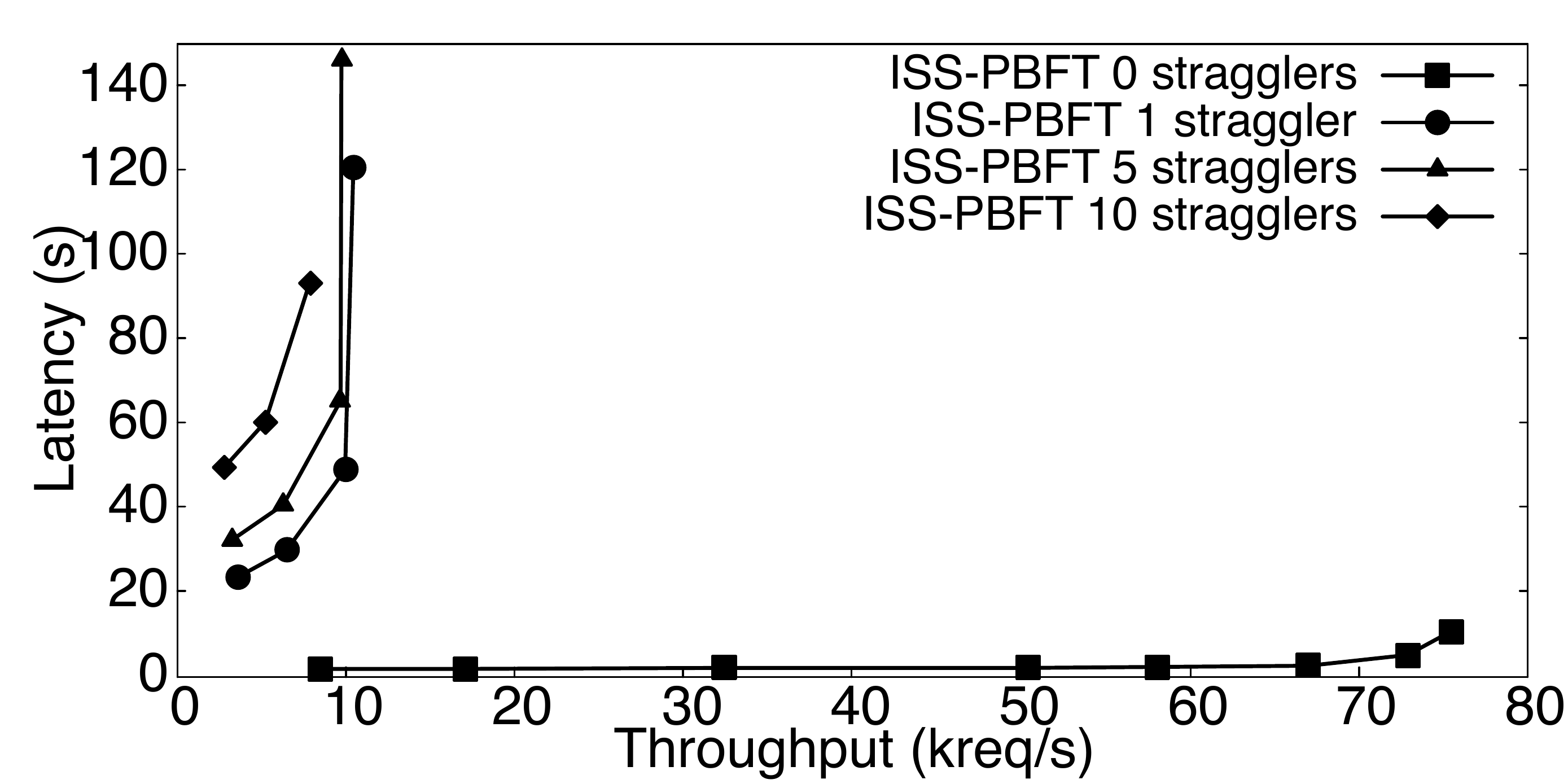}
\caption{\sysname-PBFT latency over throughput for an increasing number of stragglers with Blacklist policy.}
\label{fig:lt-stragglers}
\end{figure}

\Cref{fig:straggler-timeline} shows how throughput is affected over time with a total submission rate of $16.4$kreqs/s.
The performance degradation is due to the ``holes'' in the log temporarily created by the stragglers.
Request delivery progresses as fast as the slowest straggler, hence the spikes in the graph.
When the straggler's batch is finally committed, one more batch per leader can be delivered as well (due to the interleaved batch sequence numbers).
This is inherent to any SMR protocol~\cite{antoniadis2018state} until the straggler is removed from the leaderset.
Straggler resistance in \sysname depends on the underlying \moduleAbbr implementation.
A more sophisticated leader selection policy implementation could dynamically detect and remove stragglers from the leaderset.
\sysname facilitates such dynamic detection by comparing the progress of \moduleAbbr instances,
similarly to RBFT~\cite{RBFT} but without the need for redundant instances.
Alternatively, \moduleAbbr instances could implement Aardvark's~\cite{Aardvark} straggler detection mechanism with reducing timeouts.
This is promising future work.

\begin{figure}[h!]
\centering
\includegraphics[width=0.85\columnwidth]{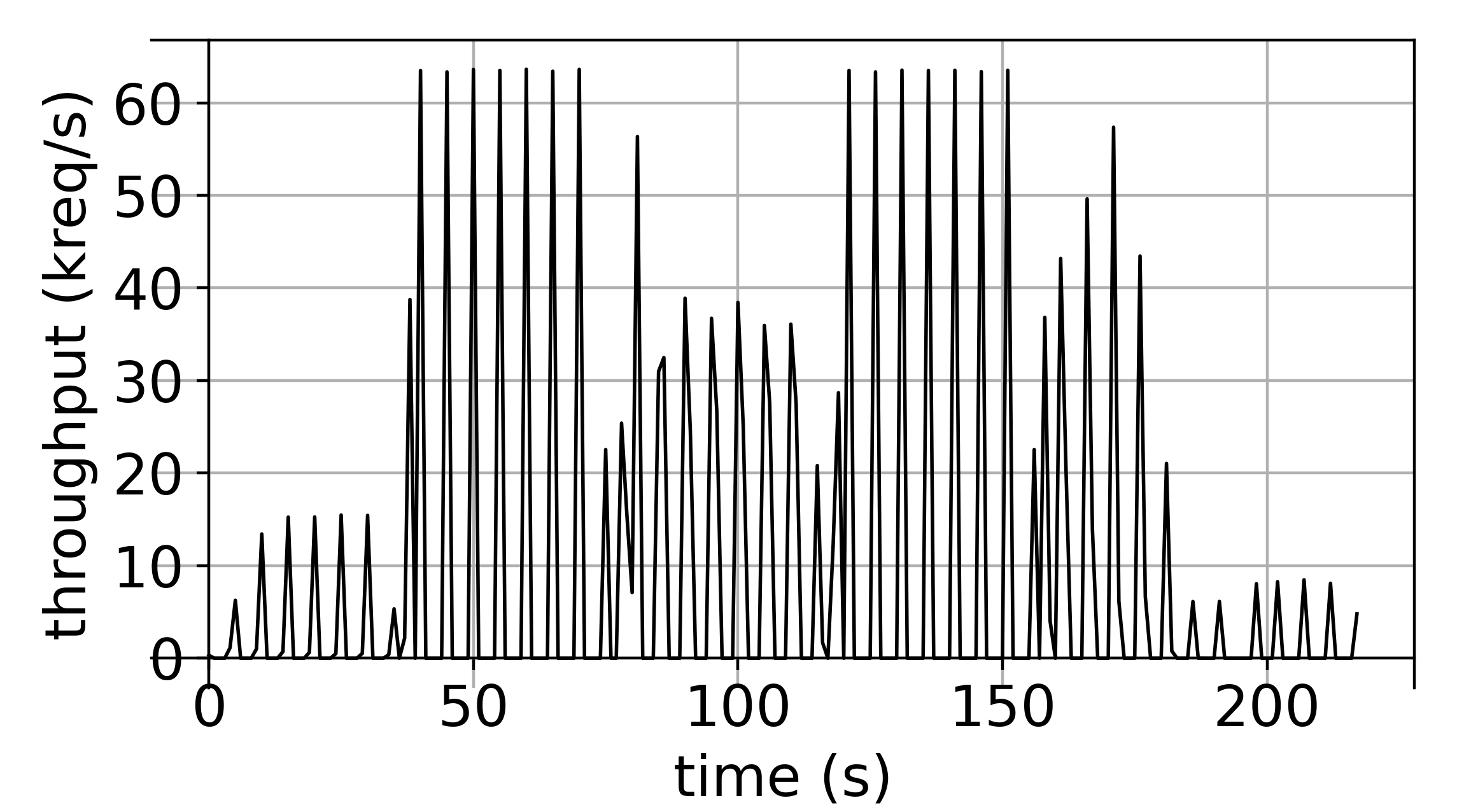}
\caption{\sysname-PBFT throughput average (over 1s intervals) over time with one Byzantine straggler.
Each spike (every $5$ seconds) corresponds to a group of correct leaders' bathces delivered after the straggler's batch.}
\label{fig:straggler-timeline}
\end{figure}

\section{Related Work}
\label{sec:related}

Consensus under Byzantine faults was first made practical by Castro and Liskov~\cite{castro1998practical}\cite{Castro:2002:PBF} who introduced PBFT, a semi-permanent leader-driven protocol.
The quadratic message complexity of PBFT across all replicas triggered vigorous research towards protocols with linear message complexity.
Ramasamy and Cachin~\cite{Parsimonious} replace reliable broadcast in the common case (fault-free execution) with echo broadcast, achieving common case message complexity $O(n)$ per delivered payload.
Echo broadcast is also exploited in \cite{Zyzzyva}\cite{SBFT} to achieve linear common case message complexity.
Only recently, HotStuff~\cite{hotstuff}, introduced a 4th communication round to the 3 message rounds of reliable broadcast, to achieve linear message complexity also for the recovery phase (view-change) of the protocol.
Regardless of the improvement of message complexity, all aforementioned protocols have a single leader at a time, persistent or not, limiting throughput scalability.

Mencius\cite{Mencius} introduced multiple parallel leaders, running instances of Paxos~\cite{paxos}, to achieve throughput scalability and low latency in WAN under crash fault assumptions.
BFT-Mencius~\cite{BFT-Mencius} was the first work to introduce parallel leaders under byzantine faults.
BFT-Mencius introduced the Abortable Timely Announced Broadcast communication primitive to guarantee bounded delay after GST.
However, BFT-Mencius, partitions requests among instances by deterministically assigning clients to replicas, which cannot guarantee load balancing.
Moreover, this opens a surface to duplication performance attacks, since malicious clients and replicas can abuse the suggested denial of service mitigation mechanism.

Guerraoui \textit{et al.}~\cite{700paper} also introduced an abstraction which allows BFT instances to abort.
The paper uses the abstraction to compose sequentially different BFT protocols, which allows a system to choose the optimal protocol according to network conditions.
Our work, on the other hand, composes TOB instances in parallel to achieve throughput scalability.

Mir-BFT~\cite{stathakopoulou2019mir} is the multi-leader protocol which eliminates request duplication \textit{ante} broadcast, effectively preventing duplication attacks.
Later FnF~\cite{avarikioti2020fnf} suggested improved leaderset policies for throughput scalability under performance attacks.
FnF adopts MirBFT's request space partitioning mechanism for duplication prevention.
Similarly, Dandelion\cite{dandelion} leverages the same mechanism to combine Algorand\cite{Algorand} instances.
However, Mir-BFT and FnF multiplex PBFT and SBFT~\cite{SBFT} instances, respectively, leveraging a single replica in the role of epoch primary.
\sysname not only eliminates the need for an epoch primary but also provides a modular framework to multiplex any single leader protocol that can implement \moduleAbbr.

Parallel to this work, several works attempt multiplexing BFT instances to achieve high throughput (Redbelly\cite{crain2018dbft}, RCC\cite{rcc}, Omada\cite{omada}).
However, similarly to BFT-Mencius, clients are assigned to primaries, and, after a timeout, a client can change primary to guarantee liveness, again allowing duplication attacks.

\section{Conclusion}
\label{sec:conclusion}
In this work we introduce \sysname, a general construction for efficiently multiplexing instances of leader-based TOB protocols and increasing their throughput.
\sysname leverages request space partitioning to prevent duplicate requests similarly to MirBFT, but
rotates the partition assignment without the need of a replica to act as a primary, even in the case of faults.
To achieve this, we introduced a Sequenced Broadcast, a novel abstraction that generalizes leader-based TOB protocols and which allows periodically terminating and synchronizing their otherwise independent instances.
Our evaluation shows that our careful engineering in \sysname implementation along with the multi-leader paradigm
indeed results in scalable performance for three single leader protocols (PBFT\cite{Castro:2002:PBF}, HotStuff\cite{hotstuff}, and Raft~\cite{RAFT}), outperforming their original designs by an order of magnitude at scale.

\bibliographystyle{plain}

\end{document}